\newcommand{\extended}[1]{}    
\newcommand{\short}[1]{#1}     
\newcommand{\lan}[1]{\ensuremath{\mathbf{#1}}\xspace}
\newcommand{\stratstyle}[1]{\ensuremath{\mathrm{#1}}}
\newcommand{\iR}{\stratstyle{iR}\xspace}
\newcommand{\ATL}[1][]{\lan{ATL_{\stratstyle{#1}}}}
\newcommand{\SL}{\lan{SL}}
\newcommand{\CSL}[1][]{\lan{CSL}}
\newcommand{\CSLP}[1][]{\lan{CSLP}}
\newcommand{\MIATL}[1][]{IATL[R]}
\newcommand{\ATLES}[1][]{\lan{ATLES}}
\newcommand{\ATELA}[1][]{\lan{ATELA}}
\newcommand{\LTL}{\lan{LTL}}
\newcommand{\coop}[2][]{\langle\!\langle{#2}\rangle\!\rangle_{_{\!\mathit{#1}}}}
\newcommand{\Next}[1][]{\!\raisebox{-.2ex}{ \mbox{\unitlength=0.9ex
            \begin{picture}(2,2)
            \linethickness{0.06ex}
            \put(1,1){\circle{2}}   \end{picture}}}_{{#1}}  \,}
\newcommand{\Sometm}[1][]{\Diamond_{{#1}}}
\newcommand{\Always}[1][]{\Box_{{#1}}}
\renewcommand{\Next}{\mathrm{X}\,}
\renewcommand{\Sometm}{\mathrm{F}\,}
\renewcommand{\Always}{\mathrm{G}\,}
\newcommand{\Until}{\,\mathrm{U}\,}
\newcommand{\Release}{\,\mathrm{R}\,}
\newcommand{\Agt}{{\ensuremath{\mathbb{A}\mathrm{gt}}}}
\newcommand{\States}{St}
\newcommand{\Props}{\ensuremath{PV}\xspace}
\newcommand{\V}{V}
\newcommand{\trans}{o}
\newcommand{\TransArrow}[1][]{\hookrightarrow....}
\newcommand{\M}{M}
\newcommand{\model}{\M}
                                    \newcommand{\onepath}[1][]{\ensuremath{\lambda\ifthenelse{\equal{#1}{}}{}{[#1]}}}         
\newcommand{\verf}{\mathbf{v}}
\newcommand{\reft}{\mathbf{r}}
\newcommand{\str}{s}
\newcommand{\prop}[1]{\ensuremath{\mathsf{{#1}}}}
\newcommand{\plaus}[1][]{\ifthenelse{\equal{#1}{}}{\mathbf{P\;\!\!l}\,}{\mathbf{P\;\!\!l}_{#1}\,}}
\newcommand{\phys}[1][]{\ifthenelse{\equal{#1}{}}{\mathbf{P\;\!\!h}\,}{\mathbf{P\;\!\!h}_{#1}\,}}
\newcommand{\plaumodels}[1][]{\ensuremath{\ifthenelse{\equal{#1}{}}{\models_\sPlaupaths}{\models_{#1}}}}
\newcommand{\sPlaupaths}{\ensuremath{P}}
\newcommand{\then}{\rightarrow}
\newcommand{\true}{\top}
\newcommand{\false}{\bot}
\newcommand{\satisf}[1][]{\models_{_{#1}}}
\definecolor{lightgrey}{rgb}{0.8,0.8,0.8}
\definecolor{grey}{rgb}{0.6,0.6,0.6}
\definecolor{darkgrey}{rgb}{0.4,0.4,0.4}
\definecolor{darkgreen}{rgb}{0,0.7,0}
\newcommand{\onlabel}[1]{\colorbox{white}{\textit{{#1}}}}
\newcommand{\set}[1]{\{{#1}\}}
\newcommand{\powerset}[1]{2^{#1}}
\newcommand{\tuple}[1]{\langle{#1}\rangle}
\newcommand{\Nat}{\mathbb{N}}
\newcommand{\complexityclass}[1]{\ensuremath{\mathbf{{#1}}}\xspace}
\newcommand{\Ptime}{\complexityclass{P}}
\newcommand{\PTIME}{\complexityclass{P}}
\newcommand{\NP}{\complexityclass{NP}}
\newcommand{\Exptime}{\complexityclass{EXPTIME}}
\newcommand{\EXPTIME}{\Exptime}
\newcommand{\LOGTIME}{\complexityclass{DLOGTIME}}
\newcommand{\putaway}[1]{}
\newcommand{\para}[1]{\smallskip\noindent\textbf{#1}}
\newenvironment{itemize2}{\begin{itemize}\itemsep 0in}{\end{itemize}}
\newenvironment{enumerate2}{\begin{enumerate}\itemsep 0in}{\end{enumerate}}
\newenvironment{description2}{\begin{description}\itemsep 0in}{\end{description}}
\newcommand{\finis}{{\scriptsize $\blacksquare$}}
\newcommand{\finisdef}{$\Box$}
\newcommand{\bul}{{\tiny $\blacksquare$}}
\def\itemiremember{\labelitemi}
\def\itemiiremember{\labelitemii}
\newtheorem{remark}{Remark}
\definecolor{tucgreen}{RGB}{0,140,79}
\renewcommand{\str}{s}
\newcommand{\WJ}[1]{{\color{tucgreen}WJ: #1}\xspace}
\renewcommand{\WJ}[1]{}
\newcommand{\CD}[1]{{\color{blue}CD: #1}\xspace}
\newcommand{\todo}[1]{{\color{red}TO DO: #1}\xspace}
\newcommand{\complx}{\mathcal{C}}
\newcommand{\Keys}{\mathcal{K}}
\newcommand{\enc}{\mathit{enc}}
\newcommand{\timecompl}[1]{\mathit{time}_{\,#1}}
\newcommand{\run}{\rho}
\newcommand{\inseq}{\alpha}
\newcommand{\strat}{\str}
\newcommand{\strset}[2][]{\mathbf{str}_{#1,#2}}
\newcommand{\genstrset}[2][]{\mathbf{Str}_{#1,#2}}
\newcommand{\genstr}{\pmb{\mathit{S}}}
\newcommand{\strtemp}{\genstr}
\newcommand{\mclass}[1][]{\mathcal{\model}_\mathit{#1}}
\newcommand{\stratunif}[1][]{{\;|\!\!\!\equiv_{#1}\;}}
\newcommand{\notstratunif}[1][]{{\;|\!\!\!\not\equiv_{#1}\;}}
\newcommand{\stratadapt}[1][]{{\;|\!\!\!=_{#1}\;}}
\newcommand{\initstate}{\state_0}
\newcommand{\repert}{R}
\renewcommand{\state}{q}
\renewcommand{\trans}{t}
\newcommand{\coffee}{\mathit{coffee}}
\title{Computationally Feasible Strategies}
\author{Catalin Dima}
\affiliation{
  \institution{LACL, Universit\'e Paris Est Cr\'{e}teil}
\city{F-94010 Cr\'{e}teil}
  \country{France}}
\email{dima@u-pec.fr}
\author{Wojciech Jamroga}
\affiliation{
  \institution{Institute of Computer Science, Polish Academy of Sciences}
  \city{and SnT, University of Luxembourg}
  \country{}}
\email{wojciech.jamroga@uni.lu}
\begin{abstract}
Real-life agents seldom have unlimited reasoning power. In this paper, we propose and study a new formal notion of computationally bounded strategic ability in multi-agent systems. The notion characterizes the ability of a set of agents to synthesize an executable strategy in the form of a Turing machine within a given complexity class, that ensures the satisfaction of a temporal objective in a parameterized game arena. We show that the new concept induces a proper hierarchy of strategic abilities -- in particular, polynomial-time abilities are strictly weaker than the exponential-time ones. We also propose an ``adaptive'' variant of computational ability which allows for different strategies for each parameter value, and show that the two notions do not coincide. Finally, we define and study the model-checking problem for computational strategies. We show that the problem is undecidable even for severely restricted inputs, and present our first steps towards decidable fragments.
\end{abstract}
\keywords{multi-agent systems; strategic ability; bounded rationality; model checking}
\begin{document}

\pagestyle{fancy}
\fancyhead{}

\maketitle

\section{Introduction}\label{sec:intro}

Multi-agent systems (MAS) involve the interaction of multiple autonomous agents, often assumed to exhibit self-interested, goal-directed behavior~\cite{\extended{Weiss99mas,Wooldridge02intromas,}Shoham09MAS}.
Many relevant properties of MAS can be phrased in terms of \emph{strategic abilities} of agents and their groups~\cite{Goranko15stratmas,Agotnes15handbook}. In particular, most functionality requirements can be specified as the ability of the authorized agents to achieve their legitimate goals, or to complete their tasks.
Moreover, many security properties refer to the inability of the ``bad guys'' to obtain their goals.
For example, a sensible requirement for a coffee vending machine is that a thirsty user can eventually get a cup of the aromatic brew if she follows the right sequence of steps.
Similarly, one might want to require that a secure file system disables impersonation, i.e., the intruder cannot log in as another user, no matter how smart strategy he chooses to play.

The ``right'' semantics of strategic ability in MAS has been a matter of hot debate since early 2000s, cf.~e.g.~\cite{Agotnes04atel,Jamroga04ATEL,Schobbens04ATL,Dima10communicating,Guelev12stratcontexts,Mogavero14behavioral,Aminof16promptATL}.
In particular, it was argued that ability-based requirements should take into account the bounded capabilities of agents~\cite{Jamroga19natstrat-aij,Jamroga19natstratii}.
Typically, strategies are defined as functions from sequences of system states (i.e., possible histories of the play) to the agent's choices.
The approach is mathematically elegant, and might be appropriate to reason about abilities of agents with unlimited time and reasoning power.
However, it makes for a poor model of human behavior, whose strategies should be \emph{easy to obtain}, \emph{easy to remember}, and \emph{efficient to use}~\cite{Nielsen94usability}; in other words, simple enough to be useful.
Similarly, one often wants to look only at threats from \emph{computationally bounded attackers}, for example in cases when {unconditional security} is unattainable~\cite{Diffie76crypto-newdirections,KatzLindell20crypto}.

In this paper, we propose and formalize the concept of \emph{computationally bounded strategic ability}.
We draw from two main sources of inspiration.
First, it was proposed in~\cite{Jamroga19natstrat-aij,Jamroga19natstratii} to formalize human ability with statements $\coop{A}^{\le k}\varphi$, saying that agents $A$ have a strategy of \emph{complexity at most $k$} to achieve goal $\varphi$. For instance, one may require that the user has a strategy of complexity at most $10$ to get an espresso (think of a sheet of paper with a recipe that uses only $10$ symbols or less).
Unfortunately, such concrete bounds are usually arbitrary. Why is a recipe of length $10$ still good for buying a coffee, but one with $11$ symbols is too complex? Why not up to $15$ symbols? A possible solution is offered by asymptotic bounds of complexity theory. For a scalable model of the coffee machine, parameterized by the increasing sophistication of the brew it produces, we may demand that the complexity of getting an espresso grows \emph{at most linearly} with the sophistication of the machine.\footnote{
  At most quadratically for Neapolitan users. } Note that the goal is the same regardless of the sophistication of the machine: getting an espresso.

Secondly, our approach is inspired by cryptographic definitions of security, where the space of potential attack strategies is defined by polynomial-time probabilistic Turing machines~\cite{Bellare98crypto-notions,KatzLindell20crypto}. However, with the advances in computing techniques (in particular quantum computers), it is unclear how long polynomial time will remain the boundary separating feasible and unfeasible attacks.

To address this, we propose a general framework that allows us to fix an arbitrary complexity class as the boundary, and ask if there exists a winning Turing-representable strategy within the class, which
achieves a fixed objective on all the game structures from an effective class.
We consider strategies represented by deterministic Turing machines; the probabilistic case is left for future work.
In terms of technical results, we show that different complexity classes induce different strategic abilities, and thus the concept of computationally-bounded strategies gives rise to a new hierarchy of abilities.
Moreover, we prove that uniform abilities (i.e., ones where a single general strategy solves all the game instances) do \emph{not} coincide with adaptive abilities (i.e., ones where different strategies can be used for different instances).
Finally, we define the model checking problem and show that it is inherently undecidable. We conclude by indicating a couple of decidable cases that can serve as a starting point to obtain more general characterizations in the future.

\para{Related work.}
Logical formalizations of strategic ability are usually based on alternating-time logic \ATL~\cite{Alur02ATL} or strategy logic \SL~\cite{Chatterjee07strategylogic,Mogavero10stratLogic}.
This includes works discussing different representations and restrictions on strategies, e.g.,~\cite{Jamroga04ATEL,Schobbens04ATL,Dima10communicating,Guelev11atl-distrknowldge,Vester13ATL-finite,Mogavero14behavioral}, and establishing the complexity of model checking for various strategy types, cf.~\cite{Bulling10verification} for an overview.
The models and decidability results for bounded- and finite-memory strategies are of particular relevance~\cite{Vester13ATL-finite}.
However, none of the frameworks takes into account the complexity of the strategies that the agents need to achieve their objective in the game -- with the notable exception of \cite{goranko-kuusisto}
where the complexity of the strategy is considered with respect to the formula size, not the model size, whereas in our case the same objective must be achieved in all models.
Complexity bounds have been only used in the natural strategy-based extensions of \ATL~\cite{Jamroga19natstrat-aij,Jamroga19natstratii} and \SL~\cite{Belardinelli22auctions}, and that just in a rudimentary form (concrete rather than asymptotic).

Our research bears a conceptual connection to \emph{bounded rationality} in game theory, where the players have limited computational power when reasoning about their choices~\cite{Rubinstein98bounded,Shoham09MAS}, e.g., by imposing bounds on their memory~\cite{Kocherlakota98moneismemory,Horner09folktheorem,Bhaskar12socialmemory}.
Also, our distinction between uniform and adaptive ability is related to \emph{uniform vs.~non-uniform complexity}~\cite{Cai12complexity,Shannon49circuits,Furst81circuits}. Specifically, uniform and non-uniform notions of security were discussed in~\cite{Goldreich93nonuniform,De09nonuniform,Bernstein13nonuniform,Koblitz13nonuniformity}.
Parameterized model checking~\cite{Aminof16parameterisedGrid} is also related.
The above connections are rather loose; in particular, we define uniformity w.r.t. Turing machines and not Boolean circuits like in the standard approach.
 \section{Preliminaries}\label{sec:preliminaries}

\para{Models of interaction.}
We consider multi-step games with imperfect information, played by multiple agents over a finite game arena with concurrent synchronous moves~\cite{Alur02ATL,Hoek02ATEL,Schobbens04ATL}.

\begin{definition}[Model]\label{def:cgs}
An \emph{imperfect information concurrent game structure (iCGS)}, or simply a \emph{model}, is given by a tuple \linebreak
$\model = \tuple{\Agt, \States, \initstate, \Props, \V, Act, \repert, \trans,\set{\sim_a}_{a\in \Agt}}$
which includes a non\-empty finite set of all agents $\Agt = \set{a_1,\dots,a_k}$, a non\-empty finite set of states $\States$, an initial state $\initstate\in\States$, a set of atomic propositions $\Props$ and their valuation $\V : \Props\rightarrow \powerset{\States}$, and a nonempty finite set of (atomic) actions $Act$. The repertoire function $\repert : \Agt \times \States \rightarrow \powerset{Act}\setminus\{\emptyset\}$ defines nonempty sets of actions available to agents at each state; we will write $\repert_a(\state)$ instead of $\repert(a,\state)$, and define $\repert_A(\state) = \prod_{a\in A}\repert_a(\state)$ for each $A\subseteq\Agt, \state\in\States$.
Furthermore, $\trans$ is a (deterministic) transition function that assigns the outcome state $\state' = \trans(\state,\alpha_1,\dots,\alpha_k)$ to each state $\state$ and tuple of actions $\tuple{\alpha_1, \dots, \alpha_k}$ such that $\alpha_i \in \repert(a_i,\state)$ for $i=1, \dots, k$.

Every $\sim_a\subseteq \States\times\States$ is an \extended{epistemic }equivalence relation with the intended meaning that, whenever $q\sim_a q'$, the states $q$ and $q'$ are indistinguishable to agent $a$.
The iCGS is assumed to be \emph{uniform}, in the sense that $q\sim_a q'$ implies $\repert_a(q)=\repert_a(q')$, i.e., the same choices are available in indistinguishable states.
Note that perfect information can be modeled by assuming each $\sim_a$ to be the identity relation.
\end{definition}

The \emph{observations} of agent $a$ are defined as $Obs_a = \set{[\state]_{\sim_a} \mid \state\in\States}$, i.e., the equivalence classes of the indistinguishability relation.
Assuming an arbitrary ordering of states (e.g., the shortlex ordering, also known as the length-lexicographic ordering), we can uniquely identify each observation $o\in Obs$ with the minimal element of $o$.
Note that the notion of repertoire lifts to observations in a straightforward way.

\para{Strategies and their outcomes.}
A \emph{strategy} of agent $a\in\Agt$ is a conditional plan that specifies what $a$ is going to do in every possible situation.
The most general variant is given by so called \emph{\iR strategies}, i.e., strategies with imperfect information and perfect recall~\cite{Schobbens04ATL}.
An \iR strategy for $a$ can be formally represented by a function $\strat_a : {Obs_a}^+\to Act$ such that $s_a(o_0\dots o_n)\in\repert_a(o_n)$.

A collective strategy $\strat_A$ for coalition $A\subseteq\Agt$ is simply a tuple of individual \iR strategies, one per agent in $A$.

A \emph{path} $\lambda=\state_0\state_1\state_2\dots$ is an infinite sequence of states such that there is a transition between each $\state_i,\state_{i+1}$.
We use $\lambda[i]$ to denote the $i$th position on path $\lambda$ (starting from $i=0$) and $\lambda[i,j]$ to denote the part of $\lambda$ between positions $i$ and $j$, which includes the case $j = \infty$.
Function $out(\model,\strat_A)$ returns the set of all paths that can result from the execution of strategy $\strat_A$ in model $\model$.
For agents outside $A$, path transitions can involve any actions allowed by their repertoires.

\para{\LTL objectives.}
To specify the ``winning condition'', i.e. the objective that the coalition wants to achieve, we will use the formulas of \emph{linear time logic \LTL}~\cite{Pnueli77temporal,Emerson90temporal}, built on atomic propositions $\Props$ that occur in the models. The syntax of \LTL is given by the following grammar:
\begin{center}
$\varphi ::= \prop{p} \mid \neg\varphi \mid \varphi\land\varphi \mid \Next\varphi \mid \varphi\Until\varphi$,
\end{center}
where $\prop{p}\in\Props$\extended{ represents atomic propositions}, $\Next$ stands for ``in the next moment,'' and $\Until$ is the temporal operator known as ``strong until.''
The Boolean constants $\false,\true$ and the other Boolean operators are defined in the standard way.
Moreover, the temporal operator ``release'' can be defined as $\varphi_1\Release\varphi_2 \equiv \neg((\neg\varphi_1)\Until(\neg\varphi_2))$,
``sometime in the future'' as $\Sometm\varphi \equiv \true\Until\varphi$, and ``always from now on'' as $\Always\varphi \equiv \false\Release\varphi$.

The semantics of \LTL is given by the following clauses:
\begin{description2}
\item[{$\lambda \satisf[\LTL] \prop{p}$}] iff $\lambda[0] \in \V(\prop{p})$;
\item[{$\lambda \satisf[\LTL] \neg\varphi$}] iff $\lambda \not\satisf[\LTL] \varphi$;
\item[{$\lambda \satisf[\LTL] \varphi_1\land\varphi_2$}] iff  $\lambda\satisf[\LTL] \varphi_1$
  and $\lambda \satisf[\LTL] \varphi_2$;
\item[{$\lambda \satisf[\LTL] \Next\varphi$}] iff $\lambda[1,\infty] \satisf[\LTL] \varphi$;
\item[{$\lambda \satisf[\LTL] \varphi_1\Until\varphi_2$}] iff
  $\lambda[i,\infty]\ \satisf[\LTL]\ \varphi_2$ for some $i\ge 0$ and $\lambda[j,\infty] \satisf[\LTL] \varphi_1$
  for all $0\leq j< i$.
\end{description2}

We say that strategy $\strat_A$ enforces objective $\varphi$ in model $\model$ iff $\lambda \satisf[\LTL] \varphi$ for every $\lambda\in out(\model,\strat_A)$.
 \section{Computational Strategic Ability}\label{sec:compgames}

In this section, we introduce the concept of computational strategies, and show that different complexity bounds produce different views of agents' ability.

\subsection{Model Templates}\label{sec:models}

We begin by the introduction of scalable models, in which one can look for a general strategy that wins for all the values of the scalability parameter.

\begin{definition}[Model template]\label{def:modeltemplate}
A \emph{model template} is a countable a family of models $\mclass = (\model_1,\model_2,\dots)$.
Each model $\model\in\mclass$ is a finite iCGS with the same set of all agents $\Agt$, actions drawn from the same set of actions $Act$, and atomic propositions drawn from the same set $\Props$.
\end{definition}

Typically, $\mclass$ will be generated by some ``security parameter'' $k$ with a countable domain of values $\Keys$.
Without loss of generality, we usually assume that $\Keys=\Nat$.
Operationally, this can be understood as having a procedure that, given a value of $k$, instantaneously returns the model $\model = \mclass(k)$.

\begin{example}[Fibonacci espresso machine]\label{ex:fibonacci}
Consider a scalable coffee machine that simultaneously provides $n\ge 2$ cups of espresso, numbered $i=1,2,\dots,n$, to $n$ users. 
The machine has two buttons for controlling the addition of sugar in the espresso cups, with the numbers and the order of the cups that receive sugar being computed as follows:
Two designated users (agents), Alice and Bob, need to press one of the two buttons, with  Alice doing the first $\lceil\frac{n}{2}\rceil$ choices and Bob the remaining ones. 
The machine counts the number $m\le n$ of requests for sugar, then computes the $m$-th Fibonacci number $F(m)$, then computes its $n$-bit binary encoding, $F(m) = f_n\dots f_1f_0$, 
and provides the $i$th user with sugared espresso if and only if $f_{n-i} = 1$. 
Also Alice gets the last cup of her batch (i.e., the one corresponding to $f_{\lfloor\frac{n}{2}\rfloor}$), and Bob gets the very last one (i.e., for $f_0$). Atomic propositions $\prop{sugar_{Alice}}$ (resp.~$\prop{sugar_{Bob}}$) indicate whether Alice (resp.~Bob) get their espresso sweet. Both agents have perfect information about the state of the game.

We can formalize the scenario by a model template $\mclass[\coffee]$ with $\Agt=\set{Alice,Bob}$ and $Act=\set{request,skip}$.
The states are $\States = \set{i/j \mid i\le j, j\le n}$, where $j$ indicates the number of decisions already made, and $i$ is the count of requests for sugar.
The repertoires are:
$R_{Alice}(i/j) = \set{request,skip}$ if $j< \lceil\frac{n}{2}\rceil$ and $R_{Alice}(i/j) = \set{skip}$ otherwise;
$R_{Bob}(i/j) = \set{request,skip}$ if $\lceil\frac{n}{2}\rceil\le j <n$ and else $R_{Bob}(i/j) = \set{skip}$.
First, Alice executes a sequence of $\lceil\frac{n}{2}\rceil$ choices, and the state of the model records the count of $request$ actions. Then, Bob executes his sequence of $\lfloor\frac{n}{2}\rfloor$ choices, and the counting continues. At the end, coffee is distributed, with the valuation of atomic propositions fixed accordingly.
The instance of $\mclass[\coffee]$ for $n=3$ is depicted in Figure~\ref{fig:fibonacci}. The states controlled by Alice are set in grey; only the relevant actions are indicated in transition labels.
\end{example}

\begin{figure}[t]
\centering
\begin{tikzpicture}[->,>=stealth',shorten >=1pt,auto,node distance=2.1cm,transform shape,semithick,scale=0.9]
\tikzstyle{state}=[circle,fill=none,draw=black,text=black,minimum size=0.6cm]
\tikzstyle{alicestate}=[state,fill=lightgrey]
\tikzstyle{strat}=[very thick]

\node[alicestate,initial left] (s00) {$0/0$}; \node[alicestate] (s01) [below left=0.6cm and 0.6cm of s00] {$0/1$};
\node[alicestate] (s11) [below right=0.6cm and 0.6cm of s00] {$1/1$};
\node[state] (s02) [below left=0.6cm and 0.6cm of s01] {$0/2$};
\node[state] (s12) [below right=0.6cm and 0.6cm of s01] {$1/2$};
\node[state] (s22) [below right=0.6cm and 0.6cm of s11] {$2/2$};
\node[state] (s03) [below left=0.6cm and 0.6cm of s02] {$0/3$};
\node[state] (s13) [below left=0.6cm and 0.6cm of s12, label=below:{$\prop{sugar_{Bob}}$}] {$1/3$};
\node[state] (s23) [below right=0.6cm and 0.6cm of s12, label=below:{$\prop{sugar_{Bob}}$}] {$2/3$};
\node[state] (s33) [below right=0.6cm and 0.6cm of s22, label=below:{$\prop{sugar_{Alice}}$}] {$3/3$};

\path
(s00)
edge node[near start, left] {$skip$}	(s01)
  edge node[near start, right] {$request$} (s11)
(s01)
  edge node[near start, left] {$skip$}	(s02)
  edge node[midway,left=-15pt] {$request$}	(s12)
(s11)
  edge node[midway,right=-12pt]{$skip$} (s12)
  edge node[near start, right]{$request$} (s22)
(s02)
  edge node[near start, left] {$skip$}	(s03)
  edge node[midway,left=-15pt] {$request$}	(s13)
(s12)
  edge node[midway,right=-12pt]{$skip$} (s13)
  edge node[midway,left=-15pt]{$request$} (s23)
(s22)
  edge node[midway,right=-12pt]{$skip$} (s23)
  edge node[near start, right]{$request$} (s33)
(s03)
  edge [loop below] (s03)
(s13)
  edge [loop left] (s13)
(s23)
  edge [loop left] (s23)
(s33)
  edge [loop left] (s33)
;
 \end{tikzpicture}
\vspace*{-10pt}
\caption{Fibonacci coffee machine $\model^\coffee_3$ for $n=3$ }
\label{fig:fibonacci}
\vspace*{-10pt}
\end{figure}

\WJ{Originally planned running examples: (1) CPA2 security (security against adaptive chosen-plaintext attacks for multiple encryptions); (2) solving labyrinth puzzles.
There are some problems with those examples, see the comments at the end of Section~\ref{sec:compgames}. }

\subsection{Computational Strategies}\label{sec:strats}

Strategies are represented by Turing machines that take a history of observations and produce a decision.
Strategy templates are Turing machines that take a model and a history, and produce a decision.

\begin{definition}[Computational strategy]\label{def:strats}
A \emph{computational strategy} $\strat$ for agent $a$ in model $\model$ is an input/output Turing machine with $1$ input tape and $1$ output tape, that takes as input a sequence of $a$'s observations, and returns as output an action from $a$'s repertoire in the model.
We require that the machine is deterministic and total, i.e., terminating on every input.

A \emph{general computational strategy} $\strtemp$ for agent $a$ in family $\mclass$ is an input/output Turing machine with 2 input tapes and 1 output tape, that takes as input the explicit representation of a model and a sequence of $a$'s observations, and returns as output an action from $a$'s repertoire in the model.

If we fix a model $M\in\mclass$ on the first input tape, the strategy template $\strtemp$ gets instantiated to the computational strategy $\strat = \strtemp(\model)$.
\end{definition}

\begin{example}\label{ex:fibonacci-strat}
The following general strategy of Bob can be easily implemented by a Turing machine of Definition~\ref{def:strats}.
Extract the scalability parameter $n$ from the model encoding on the first tape, and take the current state $i/j$ from the history encoding on the second tape.
If $j<n$, then write the encoding of $skip$ on the output tape. Else, compute $F(i)$ by means of the straightforward recursive algorithm, and write $skip$ if the least significant bit of the result is $1$, otherwise write $request$.
\end{example}

\begin{remark}[Representations of concurrent game structures]\label{rem:strats}
We are going to ask about the existence of winning strategies within a given complexity bound.
This might crucially depend on how models and observations are represented as inputs to the Turing machine that implements a strategy.
In the rest of this paper, we assume that the elements of $\Agt$, $Act$, and $\States$ are ordered in an arbitrary way, and identified by their indices.
The indices are encoded in binary, i.e., with appropriate sequences of symbols $\set{0,1}$.
Moreover, sets, tuples, and sequences can be represented by enumerating the elements, separated by the special symbol $\#$, with the opening and closing brackets encoded as $\#00\#$ and $\#01\#$, respectively.
This way, the representation of an iCGS $\model$ on the input tape incurs only a logarithmic increase with respect to the ``abstract'' size of $\model$, understood as the total number of states, transitions, and indistinguishability pairs in $\model$.
Similarly, the representation of a history of observations is at most logarithmically longer than the abstract length of the history.

Importantly, the Turing machines representing computational strategies use the ternary tape alphabet $\Gamma = \set{0,1,\#}$.
\end{remark}

\begin{definition}[Collective strategies of coalitions]\label{def:collective-strats}
A computational strategy (resp.~general computational strategy) for coalition $A\subseteq\Agt$ is simply a tuple of computational strategies (resp.~general computational strategies), one per agent $a\in A$.
We denote the set of computational strategies for $A$ in $\model$ by $\strset[\model]{A}$, and the set of general computational strategies for $A$ in $\mclass$ by $\genstrset[\mclass]{A}$.
\end{definition}

We use $out(\model,\strat)$ to denote the \emph{outcome set} of strategy $\strat$ in model $\model$, i.e., the set of infinite paths in $\model$, consistent with $\strat$ (and analogously for model templates).

\begin{definition}[Outcome]
	Given a computational strategy $s \in \strset[\model]{A}$, we define
$out(\model,\strat) =$  $\{ \lambda=\initstate,\state_1,\state_2\ldots \mid$ for each $i=0,1,\ldots$ there exists
      $\vec{\alpha} \in \repert_\Agt(\state_i)$ such that for each $a\in A$,
      $\vec{(\alpha^i)}_{a} = s_a([q_0]_{\sim_a} \cdot [q_1]_{\sim_a} \cdot \ldots \cdot [q_i]_{\sim_a})$
      and $q_{i+1} = t(q_{i},\vec{\alpha^i})$.

Moreover, the outcome of a general computational strategy $\strtemp \in \genstrset[\mclass]{A}$ is defined as $out(\mclass,\strtemp) = \bigcup_{\model\in\mclass} out(\model,\strtemp(\model))$.
\end{definition}

\begin{example}\label{ex:fibonacci-outcome}
The outcome of the strategy $\genstr_{Bob}$, described in Example~\ref{ex:fibonacci-strat}, in model $\model^\coffee_3$ of Figure~\ref{fig:fibonacci} is
$out(\model^\coffee_3,\genstr_{Bob}) = \set{ 0/0\cdot 0/1 \cdot 0/2\cdot (1/3)^\omega ,\ 0/0\cdot 1/1\cdot 1/2\cdot (1/3)^\omega}$.

The outcome of $\genstr_{Bob}$ in template $\mclass[\coffee]$ is 
$out(\mclass[\coffee],\genstr_{Bob}) = 
\set{ 0/0 \dots i/\lceil\frac{n}{2}\rceil \dots i/(n\!-\!1) \cdot i/n^\omega \mid n\in\Nat, 0\le i\le \lceil\frac{n}{2}\rceil} $ if $F(i)$ 
 is odd, and 
$\set{0/0 \dots i/\lceil\frac{n}{2}\rceil \dots  i/(n\!-\!1)\cdot (i\!+\!1)/n^\omega \mid n\in\Nat, 0\le i\le \lceil\frac{n}{2}\rceil}$ 
if \\ 
$F(i)$ is even. 
\end{example}

\subsection{Complexity of Models and Strategies}\label{sec:complexity}

The time complexity of a strategy is defined as the number of steps needed to produce decisions, measured with respect to the size of the input.

\begin{definition}[Time complexity of general strategies]\label{def:strat-complexity}
Consider a model $\model$ and a computational strategy $\strat_a$ of agent $a$ in $\model$.
Let $runs(\strat_a,\inseq)$ denote the set of runs of strategy $\strat_a$ on the input $\enc(\inseq)$, encoding a sequence of observations $\inseq$.
Note that $\strat_a$ is a Turing machine as per Definition~\ref{def:strats}, and its \emph{runs on $\enc(\inseq)$} should not be confused with its \emph{outcome paths in $\model$}. 
For a run $\run$, we use $|\run|$ to denote the number of steps on the run. Clearly, $|\run| \in \Nat$ if the run is terminating, and $|\run| = \infty$ otherwise.
Similarly, $|\enc(\inseq)|$ and $|\enc(\model)|$ denote the encoding size of the input sequence $\inseq$ and the model $\model$, respectively.

The (pessimistic) \emph{time complexity of general strategy $\strtemp$} is captured by function
$\timecompl{\strtemp_a}: \Nat\times\Nat \then \Nat$, defined as follows:
\begin{eqnarray*}
\timecompl{\strtemp_a}(n,i) &=& \sup\{|\run|\ \mid\ \run\in runs(\strtemp_a(\model),\inseq), \\
  && \qquad\qquad\model\in\mclass, |\enc(\model)|=n, |\enc(\inseq)|=i\}.
\end{eqnarray*}
\end{definition}

An alternative way to define the complexity of a computational strategy would be to set it against the \emph{abstract} sizes of the model and the history.
Note that, under the assumption put forward in Remark~\ref{rem:strats}, such complexity would be at most logarithmically higher than the one defined above.

\begin{definition}[Complexity of collective general strategies]\label{def:collective-complexity}
The \emph{complexity of collective general strategy $\strtemp_A$} is defined as\ $\timecompl{\strtemp_A}(n,i) = \max_{a\in A} \timecompl{\strtemp_a}(n,i)$.
\end{definition}

\begin{example}\label{ex:fibonacci-complexity}
The strategy $\genstr_{Bob}$ of Example~\ref{ex:fibonacci-strat} runs in $O(r^n)$ where $r$ is the \emph{golden ratio}.
This is due to the complexity of the straightforward (and naive) computation of Fibonacci numbers~\cite{Dasdan18fibonacci}, used in the strategy.
\end{example}

Note that the definitions in this section can be extended to deterministic \emph{space} complexity in a straightforward way.

\subsection{Ability in Computational Strategies}\label{sec:uniform}

The \emph{uniform} computational ability captures the existence of a computationally feasible strategy template that uniformly wins {all the games in $\mclass$}.

\begin{definition}[Uniform computational ability]\label{def:uniform-ability}
Let $\mclass$ be a model template, $\varphi$ an \LTL objective in $\mclass$, and $\complx$ a complexity class.
Agents $A\subseteq\Agt$ have \emph{uniform $\complx$-ability} in $\mclass$ for $\varphi$ (written: $\mclass,A \stratunif[\complx] \varphi$) if there exists a general strategy $\strtemp_A$ for $A$, such that:
\begin{enumerate2}
\item For every path $\lambda\in out(\mclass,\strtemp_A)$, we have that $\lambda \satisf[\LTL] \varphi$, and
\item There exists $f\in\complx$ such that $\forall n,i\ .\ \timecompl{\strtemp_A}(n,i) \le f(n,i)$.
\end{enumerate2}
\end{definition}

Definition~\ref{def:uniform-ability} can be easily generalized to $\omega$-regular objectives~\cite{McNaughton66omega-regular,Alfaro00omegareg-games}.

\begin{example}\label{ex:fibonacci-uniform}
Going back to our running example, we observe that even the grand coalition cannot uniformly provide both Alice {and} Bob with sugared espresso, regardless of the complexity bound. This is because Alice's bit of $F(n)$ will always be $0$ for large enough values of $n$.
In consequence,
$\mclass[\coffee],\set{Alice,Bob} \notstratunif[\complx] \Sometm\prop{sugar_{Alice}} \land \Sometm\prop{sugar_{Bob}}$, for all complexity classes $\complx$.

On the other hand, Bob can make sure that he eventually gets his espresso with sugar. For each pair $F(k), F(k+1)$ of subsequent numbers, at least one of them is odd and hence with $f_0=1$. Thus, Bob's strategy of Example~\ref{ex:fibonacci-strat} enforces $\Sometm\prop{sugar_{Bob}}$. By the observation in Example~\ref{ex:fibonacci-complexity}, we get that $\mclass[\coffee],\set{Bob} \stratunif[\EXPTIME] \Sometm\prop{sugar_{Bob}}$.

Notice further that the strategy can be improved by computing Fibonacci numbers using the recursive algorithm with memoization (i.e., storing the intermediate results), running in time $O(n)$.
Thus, we also get $\mclass[\coffee],\set{Bob} \stratunif[\PTIME] \Sometm\prop{sugar_{Bob}}$.
Finally, if the value of $n$ is encoded explicitly in the representation of $\model_m^\coffee$, then the strategy can be further optimized to $O(\log n)$ by means of an algorithm based on a matrix representation of the Fibonacci sequence and exponentiation by repeated squaring~\cite{Dasdan18fibonacci}.
In consequence, $\mclass[\coffee],\set{Bob} \stratunif[\LOGTIME] \Sometm\prop{sugar_{Bob}}$.

Observe also that, while Alice cannot make her espresso sweet, she can ensure that she gets it bitter. To this end, it suffices that she always executes $skip$, regardless of what she sees in the input.
Thus, $\mclass[\coffee],\set{Alice} \stratunif[O(1)] \Always\prop{sugar_{Alice}}$.
\end{example}

\begin{remark}
In Definition~\ref{def:uniform-ability}, we only require the existence of a winning computational strategy for agents $A$ within the given complexity class. We do \emph{not} require that the agent can also computationally \emph{verify} that the strategy is winning and appropriately bounded. This is somewhat analogous to the notions of \emph{objective} vs. \emph{subjective} strategic ability, see~\cite[Section~11.5.2]{Agotnes15handbook} for a discussion.
\end{remark}

\WJ{It might be worth formalizing the other notion in the future. BTW, I don't think it is completely trivial to formalize.}

\subsection{Hierarchy of Computational Abilities}\label{sec:hierarchy-uniform}

We will now show that the concept of ability, proposed in Definition~\ref{def:uniform-ability}, is nontrivial.
In particular, uniform ability in \Ptime does not subsume uniform ability in \EXPTIME (unless \Ptime=\NP).

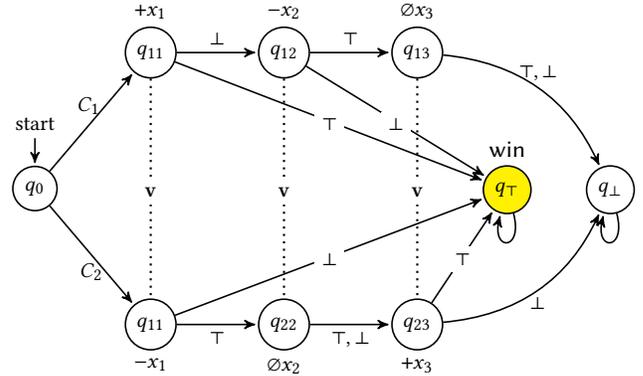
\begin{figure}[t]
\centering
\begin{tikzpicture}[->,>=stealth',shorten >=1pt,auto,node distance=2.1cm,transform shape,semithick,scale=0.9]
\tikzstyle{state}=[circle,fill=none,draw=black,text=black,minimum size=0.6cm]
\tikzstyle{initstate}=[state]
\tikzstyle{winstate}=[state,fill=yellow]
\tikzstyle{strat}=[very thick]
\tikzstyle{epist}=[-,dotted,thick]
\newcommand{\resc}{0.8}

\node[initstate,initial above] (q0) {$q_0$}; \node[state] (q11) [above right=1.5cm and \resc*1.5cm of q0, label=above:{$+x_1$}] {$q_{11}$};
\node[state] (q12) [right=\resc*1.5cm of q11, label=above:{$-x_2$}] {$q_{12}$};
\node[state] (q13) [right=\resc*1.5cm of q12, label=above:{$\varnothing x_3$}] {$q_{13}$};
\node[state] (q21) [below right=1.5cm and \resc*1.5cm of q0, label=below:{$-x_1$}] {$q_{11}$};
\node[state] (q22) [right=\resc*1.5cm of q21, label=below:{$\varnothing x_2$}] {$q_{22}$};
\node[state] (q23) [right=\resc*1.5cm of q22, label=below:{$+x_3$}] {$q_{23}$};
\node[winstate] (qwin) [below right=1.5cm and \resc*1cm of q13, label=above:{\large $\prop{win}$}] {$q_\true$};
\node[state] (qlose) [right=\resc*1cm of qwin] {$q_\false$};

\path
(q0)
  edge node[midway,above] {$C_1\ $}	(q11)
  edge node[midway,below] {$C_2$}	(q21)
(q11)
  edge node[midway,above] {$\false$} (q12)
  edge node[midway,above=-10pt] {\onlabel{$\true$}}	(qwin)
  edge[epist] node[midway,above=-10pt]{\onlabel{$\verf$}} (q21)
(q12)
  edge node[midway,above] {$\true$} (q13)
  edge node[midway,above=-10pt] {\onlabel{$\false$}}	(qwin)
  edge[epist] node[midway,above=-10pt]{\onlabel{$\verf$}} (q22)
(q13)
  edge[bend left] node[midway,above] {$\true,\false$}	(qlose)
  edge[epist] node[midway,above=-10pt]{\onlabel{$\verf$}} (q23)
(q21)
  edge node[midway,below] {$\true$} (q22)
  edge node[midway,above=-10pt] {\onlabel{$\false$}}	(qwin)
(q22)
  edge node[midway,below] {$\true,\false$} (q23)
(q23)
  edge node[midway,above=-10pt] {\onlabel{$\true$}}	(qwin)
  edge[bend right] node[midway,below] {$\false$}	(qlose)
(qwin)
  edge [loop below] (qwin)
(qlose)
  edge [loop below] (qlose);
 \end{tikzpicture}
\caption{The iCGS $M_\phi$ for $\phi \equiv (x_1 \lor \neg x_2) \land (\neg x_1 \lor x_3)$. The refuter $\reft$ controls the choice at the initial state $q_0$, and the verifier $\verf$ makes all the other choices. We only indicate the action selected by the active player for each transition. }
\label{fig:satgame}
\end{figure}

\begin{theorem}\label{prop:uniform-nontrivial}
There exists a model template $\mclass$, coalition $A$ in $\mclass$, and \LTL objective $\varphi$, such that $\mclass,A \stratunif[\EXPTIME] \varphi$ but not $\mclass,A \stratunif[\Ptime] \varphi$.
This applies even when restricting $A$ to singleton coalitions and $\varphi$ to reachability objectives.
\end{theorem}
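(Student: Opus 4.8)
The plan is to instantiate $\mclass$ by the classical reduction from propositional satisfiability to the existence of a winning imperfect-information strategy, made explicit by the iCGS $M_\phi$ of Figure~\ref{fig:satgame}. Given a CNF formula $\phi$ over $x_1,\dots,x_n$ with clauses $C_1,\dots,C_m$, the model $M_\phi$ has $\Agt=\{\verf,\reft\}$ and a fixed finite action set $Act$ (two actions suffice). From $q_0$ the refuter chooses a clause index through a depth-$\lceil\log_2 m\rceil$ binary subtree --- this keeps $Act$ fixed and $|M_\phi|$ polynomial even for large $m$ --- after which the game enters the ``clause chain'' $q_{i,1}\!\to\! q_{i,2}\!\to\!\cdots\!\to\! q_{i,n}$; at $q_{i,j}$ the verifier plays a truth value for $x_j$, moving to an absorbing $\prop{win}$-state iff that value satisfies the literal of $x_j$ in $C_i$, and continuing down the chain otherwise (the chain ends in a non-winning sink if no literal is ever satisfied). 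The verifier's indistinguishability relation is chosen so that all states ``at position $j$'' form one observation and all refuter-choice nodes of a given depth form one observation; hence a strategy of $\verf$ depends only on $j$ and is, in effect, a truth assignment $v\colon\{1,\dots,n\}\to\{0,1\}$, which is winning for the reachability objective $\varphi\equiv\Sometm\prop{win}$ exactly when $v\models\phi$. Thus $\verf$ has a winning strategy in $M_\phi$ iff $\phi$ is satisfiable, and from any winning strategy one extracts a model $v$ of $\phi$, namely $v_j=$ the action the strategy outputs on the unique observation history $\hist_j$ that reaches position $j$. We set $\mclass=\{\,M_\phi\mid \phi\text{ a satisfiable CNF}\,\}$, $A=\{\verf\}$, and $\varphi\equiv\Sometm\prop{win}$; note $A$ is a singleton and $\varphi$ is reachability.

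For $\mclass,A\stratunif[\EXPTIME]\varphi$ I would exhibit the general strategy $\strtemp_A$ that, on input $(\enc(\model),\enc(\hist))$: parses the CNF $\phi$ that $\model$ encodes (readable in polynomial time from the transition relation and the valuation of $\prop{win}$); computes by brute force the lexicographically least satisfying assignment $v$ of $\phi$, which exists because $\model\in\mclass$; reads the current position $j$ off $\hist$; and outputs $v_j$ (on malformed inputs it outputs any repertoire action, which cannot occur for $\model\in\mclass$). By correctness of the gadget this enforces $\varphi$ in every $\model\in\mclass$; and since $n\le|\enc(\model)|$, its running time is $2^{O(|\enc(\model)|)}+\mathrm{poly}(|\enc(\model)|,|\enc(\hist)|)$, so $\timecompl{\strtemp_A}\le f$ for some $f\in\EXPTIME$. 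This half is unconditional.

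For the negative part I would argue contrapositively. Suppose $\strtemp_A$ witnesses $\mclass,A\stratunif[\Ptime]\varphi$ with $\timecompl{\strtemp_A}\le p$ for a polynomial $p$; I claim \textsc{Sat}$\,\in\Ptime$. Given an arbitrary CNF $\phi$, build $M_\phi$ in polynomial time and size, compute the $n$ short observation histories $\hist_1,\dots,\hist_n$ reaching positions $1,\dots,n$, simulate $\strtemp_A$ on each $(\enc(M_\phi),\enc(\hist_j))$ for $p(|\enc(M_\phi)|,|\enc(\hist_j)|)=\mathrm{poly}(|\phi|)$ steps, let $v_j$ be its output if it has halted and $0$ otherwise, and accept iff $v=(v_1,\dots,v_n)\models\phi$. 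If $\phi$ is satisfiable then $M_\phi\in\mclass$, so $\strtemp_A(M_\phi)$ is winning and halts within its bound on each $\hist_j$; since all refuter choices are hidden, against the play that selects clause $C_i$ the verifier sees exactly $\hist_1,\hist_2,\dots$ and plays $v_1,v_2,\dots$, so winning forces $v\models C_i$, hence $v\models\phi$ and we accept. If $\phi$ is unsatisfiable, no assignment --- in particular not $v$ --- satisfies it, so we reject. Thus the procedure decides \textsc{Sat} in polynomial time and $\Ptime=\NP$. Equivalently, if $\Ptime\ne\NP$ then $\mclass,A\notstratunif[\Ptime]\varphi$, which together with the previous paragraph establishes the statement under the assumption $\Ptime\ne\NP$ already flagged in the heading of Section~\ref{sec:hierarchy-uniform}.

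The delicate point --- the one I expect to be the main obstacle --- is the complexity bookkeeping that keeps the two bounds genuinely apart. One must verify that $\phi\mapsto M_\phi$ is polynomial (the binary-selection subtree is what makes this possible while simultaneously keeping $Act$ finite across the family), and, crucially, that $M_\phi$ has only polynomial depth, so that every relevant history $\hist_j$ has size $\mathrm{poly}(|M_\phi|)$: this is exactly what prevents a $\Ptime$ general strategy from being handed enough ``free'' input length to carry out an exponential computation, and forces it to genuinely solve \textsc{Sat}. Dually, one must check that the $\EXPTIME$ strategy's cost really is dominated by the $2^{O(|\enc(\model)|)}$ brute-force search with no hidden blow-up in the history length. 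A secondary care point is pinning down $\sim_\verf$ precisely enough (all position-$j$ states equivalent, all depth-$d$ selection nodes equivalent) that a strategy of $\verf$ collapses to a single assignment --- needed both for the equivalence with satisfiability and for the extraction of $v$ from a winning strategy in the reduction.
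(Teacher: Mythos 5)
Your proposal is correct and follows essentially the same route as the paper: the imperfect-information SAT game in which the refuter secretly picks a clause and the verifier's observation-based strategy collapses to a truth assignment, a brute-force exponential-time general strategy for the satisfiable instances, and a contrapositive argument turning a polynomial-time general strategy into a polynomial-time SAT decider. Your two refinements --- the binary clause-selection subtree that keeps the action alphabet fixed, and the timeout-based simulation that extracts the assignment without assuming the machine respects its polynomial bound on encodings of unsatisfiable (hence out-of-template) instances --- are sound and, if anything, tidy up a point the paper's proof glosses over.
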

\begin{proof}
The idea is to take a suitable game encoding of the Boolean satisfiability problem (SAT), and show that (1) the proponent has a general exponential-time strategy to win all the ``satisfiable'' games, but (2) the existence of a general polynomial-time strategy to do so would imply that SAT can be itself solved in deterministic polynomial time.
To this end, we take inspiration from the ``Boolean satisfiability games'' of~\cite[Section~3.1]{Schobbens04ATL} and~\cite[Section~3.1]{Jamroga06atlir-eumas}, which present two different reductions of the SAT problem to model checking of alternating time temporal logic with imperfect information and memoryless strategies. None of the reductions works for strategies with memory, but they can be combined and adapted to our case.

Our encoding of SAT works as follows.
Let $\phi$ be a Boolean formula in Conjunctive Normal Form with $n$ clauses and $k$ Boolean variables $x_1,\dots,x_k$. We assume w.l.o.g.~that the literals in each clause are ordered according to the underlying variable, and literals with the same variable have been reduced. Thus, each clause $C_i$ can be represented as $a_1^i x_1\land \dots \land a_k^i x_k$, with each $a_j^i\in\set{+,-,\varnothing}$ indicating whether $C_i$ includes a positive occurrence of $x_j$ ($+$), a negated occurrence of $x_j$ ($-$), or no occurrence of $x_j$ ($\varnothing$).

The corresponding iCGS $M_\phi$ is constructed as follows. There are two agents, the ``verifier'' $\verf$ and the ``refuter'' $\reft$. First, the refuter chooses a clause in $\phi$, without $\verf$ knowing which one has been chosen. Then, the verifier goes through the literals in the clause, one by one, declaring the value of the underlying proposition to be either $\true$ or $\false$. If it makes the literal true, the game proceeds to the winning state $q_\true$, otherwise the system moves to the next literal in the clause (literals with $\varnothing$ are unsatisfiable by definition). If this has been the last literal in the clause, then the game ends in the losing state $q_\false$. At each stage, the verifier only knows which literal in the clause (equivalently, which Boolean variable) is currently handled.
As an example, the satisfiability game for $\phi \equiv (x_1 \lor \neg x_2) \land (\neg x_1 \lor x_3)$ is presented in Figure~\ref{fig:satgame}.
Similarly to~\cite{Schobbens04ATL,Jamroga06atlir-eumas}, we have that
\begin{quote}
$\phi$ is satisfiable\quad iff\quad there exists an \iR strategy for $\verf$ in $M_\phi$ that surely obtains $\Sometm\prop{win}$. (*)
\end{quote}
Clearly, the winning \iR strategy can be always found by a brute-force search checking all \iR strategies in $M_\phi$ in exponential time w.r.t. the size of $M_\phi$. (**)

Take $\mclass[Sat]$ to be the set of satisfiability games $M_\phi$ for all the satisfiable formulas $\phi$,
let $\mclass[Unsat]$ be the set of satisfiability games for the unsatisfiable formulas,
and $\mclass[Bool]=\mclass[Sat]\cup\mclass[Unsat]$ the set of satisfiability games for all the Boolean formulas.
Clearly, each of those is a countable set of finite iCGS's; we assume the shortlex ordering of their elements when needed.
By (*) and (**), we get $\mclass[Sat],\set{\verf} \stratunif[\EXPTIME] \Sometm\prop{win}$.

Suppose that $\mclass[Sat],\set{\verf} \stratunif[\Ptime] \Sometm\prop{win}$ is also the case, i.e., the verifier has a polynomial-time general strategy $\genstr_\verf$ that obtains $\Sometm\prop{win}$ in $\mclass[Sat]$.
We will show that $\genstr_\verf$ can be used to construct a deterministic polynomial-time algorithm to solve Boolean satisfiability, thus contradicting \NP-completeness of the latter problem (unless $\Ptime=\NP$). The algorithm proceeds as follows:
\begin{enumerate2}
\item Given a Boolean formula $\phi$, construct its \extended{satisfiability}\short{sat.} game $M_\phi$;
\item Generate the prefixes of length $k+2$ for all the runs of $\genstr_\verf$ in $M_\phi$. If all \extended{those prefixes }end up in $q_\true$, return \emph{true}; otherwise, return \emph{false}.
\end{enumerate2}
Note that:
\begin{itemize2}
\item By assumption, $\genstr_\verf$ runs in polynomial time for any input over alphabet $\set{0,1,\#}$, in particular for the representation of any $M_\phi\in\mclass[Bool]$ stored on the first input tape.
    However, it is only guaranteed to implement a \emph{winning} strategy for inputs representing iCGS's in $\mclass[Sat]$. In fact, we know that such strategies for $\mclass[Unsat]$ cannot exist (and thus cannot be represented).
\item The outcome of any $\verf$'s strategy in $M_\phi$ contains exactly $n$ paths, one for each possible choice of the refuter in $q_0$.
\item Each path in $M_\phi$ reaches either $q_\true$ or $q_\false$ in exactly $k+2$ steps\extended{ (counting the initial state)}, and stays there forever from then on.
\end{itemize2}

We claim that the procedure runs in polynomial time, and returns \textit{true} iff $\phi$ is satisfiable, which obtains the contradiction and completes the proof.
\end{proof}

\WJ{TO DO LATER: show that the hierarchy collapses for safety/reachability goals in perfect info games. What about general \LTL goals in perfect info games?}

\WJ{Possible problems with the approach -- to be addressed later:
\begin{enumerate}
\item In order to analyze e.g. CPA security in our framework, we would need to include all the possible encryption keys in the model, and there are exponentially many of them wrt the length of the key. Then, the brute-force strategy of the attacker would be actually of linear time wrt the size of the model. Obviously, we don't want that. The problem is, we want the strategy of the attacker to be polynomial-time wrt an \emph{abstract view} of the procedure, with some parts given by oracles whose complexity is not a part of the model, cf. e.g. the definition of CPA security in (Katz and Lindell 2021).

\item Our complexity definition is in line with the standard one for Turing machines: the complexity is measured as a function of the whole input, i.e., both the model encoding and the history of observations. If the model has an ``idling'' loop for the concerned agent, it allows them to ``cheat'' around the complexity bound -- see the remark below. I am actually not sure if it's a bug or maybe a good feature. Possibly, it exposes the conceptual difference between constraining the complexity of the \emph{reasoning process} (captured by the number of steps of the TM implementing the strategy) vs. the number of \emph{game steps} in the model until reaching the goal. The latter can be e.g. captured by asymptotically constrained temporal operators, such as $\Sometm^\Ptime$.
    The two factors seem to be implicitly merged in security definitions, which is a potential conceptual gap that we might look closer at in the future.

    A simple scenario that exemplifies the difference between the two kinds of constraints is labyrinth puzzles. For the class of acyclic labyrinths, there is a winning strategy template that computes the decision in constant time (``if the space in front is open, then walk on, else turn left''). However, its execution requires linearly many \emph{game steps} until the winning state is achieved.

\item Another thing that we might want to include in our framework is the incremental nature of computing the next decision within a strategy. Normally, a program implementing the strategy would retain the computation outcome of the previous choice (for history $h_{i-1}$) and only take the next observation $o_i$ as the subsequent input to compute the decision for $h_{i-1}o_i$. If we want to go this way, we may consider \emph{persistent Turing machines} of (Goldin 2000) or \emph{interactive Turing machines} of (van Leeuwen and Widermann 2006) for our models of computational strategies.
\end{enumerate}

\begin{remark}
If the complexity bound is wrt to the length of the current history, the attacker can possibly cheat by first ``mindlessly'' producing an exponentially long history(*), and then breaking the encryption in polynomial time wrt the length of that history.

Re (*): The attacker computes the size of the model encryption $|enc(\model)|$ from the first input tape, then computes the length of the history on tape 2 and checks whether it is greater than $(1^{|enc(\model)|})_2$, which is a number greater or equal than $2^{|\model|}$. If false, the attacker idles; if true, proceeds to brute-force breaking of the encryption.
\end{remark}

}
 \section{Adaptive Computational Strategies}\label{sec:adaptive}

Uniform security notions are rather weak in the sense that the intruder must have a \emph{single} attack strategy that works for all values of the security parameter (e.g., all the possible key lengths).
If such a powerful attack strategy does not exist, the system is deemed secure.
In real life, we should be equally worried by the possibility that the intruder always has an efficient attack strategy against the \emph{current} value of the parameter. When we upgrade the security parameter to disable the attack, he takes some time to prepare a new attack strategy that efficiently works against the new security parameter, and so on.
We formalize this kind of strategic abilities here.
We note in passing the relation to non-uniform complexity theory~\cite{Cai12complexity}. However, unlike standard approaches that use Boolean circuits and measure complexity in terms of the circuit size or depth~\cite{Shannon49circuits,Furst81circuits,Cai12complexity}, we build directly on the time complexity of Turing machines.

\subsection{Non-Uniform Strategies and Abilities}

Complexity bounds are by definition asymptotic.
How can we define asymptotic bounds when considering a different strategy for each single model?
The solution is to look at \emph{collections} of such strategies, one per model.
In line with that, we generalize Definitions~\ref{def:strat-complexity} and~\ref{def:collective-complexity} to any mapping from a family of models to strategies in those models.

\begin{definition}[Strategy mappings]
$ST_A : \mclass \then \strset[\mclass]{A}$ is a \emph{strategy mapping for $A$} if $ST_A(\model)\in\strset[\model]{A}$ for every model $\model\in\mclass$. Intuitively $ST_A$ represents a family of computational strategies, each of them fitted for a different game that the agents in $A$ might come to play.
The outcome of a strategy mapping is defined as $out(\mclass,ST) = \bigcup_{\model\in\mclass} out(\model,ST(\model))$.
\end{definition}

\begin{definition}[Complexity of strategy mappings]
The time complexity of strategy mapping $ST_A$ is defined as
\begin{eqnarray*}
\timecompl{ST_a}(n,i) &=& \sup\{|\run|\ \mid\ \run\in runs(ST_a(\model),\inseq), \\
  && {}\qquad\qquad \model\in\mclass, |\model|=n, |\inseq|=i\} \\
\timecompl{ST_A}(n,i) &=& \max_{a\in A} \timecompl{ST_a}(n,i);
\end{eqnarray*}
where $|\model|$ is the abstract size of model $\model$, measured by the total number of states, transitions, and indistinguishability links in $\model$.
\end{definition}

\begin{definition}[Adaptive computational ability]
Let $\mclass$ be a family of models, $\varphi$ an \LTL objective in $\mclass$, and $\complx$ a complexity class.
Agents $A\subseteq\Agt$ have \emph{adaptive $\complx$-ability} in $\mclass$ for $\varphi$ (written: $\mclass,A \stratadapt[\complx] \varphi$) if there exists a strategy mapping $ST_A$ for $A$ in $\mclass$, such that:
\begin{enumerate}
\item For every path $\lambda\in out(\mclass,ST_A)$, we have that $\lambda \satisf[\LTL] \varphi$, and
\item There exists $f\in\complx$ such that $\forall n,i\ .\ \timecompl{ST_A}(n,i) \le f(n,i)$.
\end{enumerate}
\end{definition}

\WJ{\todo{revise to match the running example} For example, the coercer $c$ has adaptive polynomial-time ability to get candidate $1$ elected if he has a strategy mapping $ST_c$ that achieves $\Sometm\,\prop{win_1}$ on all outcome paths, such that the execution time of $ST_c$ is bounded by a polynomial.}

\WJ{To do later (hierarchy of adaptive abilities):
 (a) show that adaptive ability in \Ptime does not subsume adaptive ability in \EXPTIME,
 (b) similarly, the hierarchy collapses for safety/reachability goals in perfect info games;}

\subsection{Uniform vs. Adaptive Ability}

In this section we prove the main technical result of this paper, stating that uniform and adaptive abilities do not coincide.
In fact, uniform ability is strictly stronger than adaptive ability.

\begin{proposition}\label{prop:uniform2adaptive}
Uniform ability always implies adaptive ability.
Formally, for every model template $\mclass$, coalition $A$ in $\mclass$, complexity class $\complx$, and \LTL objective $\varphi$, if $\mclass,A \stratunif[\complx] \varphi$ then $\mclass,A \stratadapt[\complx] \varphi$.
\end{proposition}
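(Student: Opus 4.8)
The plan is to take the general strategy that witnesses uniform ability and simply \emph{instantiate} it at every model, obtaining a strategy mapping, and then argue that both clauses of adaptive ability survive this move. Concretely, starting from a general strategy $\strtemp_A\in\genstrset[\mclass]{A}$ witnessing $\mclass,A\stratunif[\complx]\varphi$ -- so $\strtemp_A$ enforces $\varphi$ on all of $out(\mclass,\strtemp_A)$ and $\timecompl{\strtemp_A}\le f$ for some $f\in\complx$ -- I would define the strategy mapping $ST_A$ by $ST_A(\model):=\strtemp_A(\model)$ for each $\model\in\mclass$.

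First I would check that $ST_A$ is a legitimate strategy mapping: by Definition~\ref{def:strats}, fixing $\model$ on the first input tape of $\strtemp_A$ yields a computational strategy for $A$ in $\model$, hence $ST_A(\model)\in\strset[\model]{A}$. The winning condition is then essentially immediate: since $out(\mclass,\strtemp_A)=\bigcup_{\model\in\mclass}out(\model,\strtemp_A(\model))$ and $out(\mclass,ST_A)=\bigcup_{\model\in\mclass}out(\model,ST_A(\model))$, and the two outcome sets coincide model by model, we get $out(\mclass,ST_A)=out(\mclass,\strtemp_A)$; so every $\lambda\in out(\mclass,ST_A)$ satisfies $\lambda\satisf[\LTL]\varphi$, which is clause~(1) of adaptive ability.

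The step that needs care -- and the one I expect to be the main obstacle -- is transferring the complexity bound, because uniform complexity (Definition~\ref{def:strat-complexity}) is reported against the \emph{encoding} sizes $|\enc(\model)|,|\enc(\inseq)|$, whereas the complexity of a strategy mapping is reported against the \emph{abstract} sizes $|\model|,|\inseq|$. Here I would note that for each $\model$ and each input $\inseq$ the machine $ST_A(\model)$ is literally $\strtemp_A(\model)$, so $runs(ST_A(\model),\inseq)=runs(\strtemp_A(\model),\inseq)$ and the step counts are identical; only the parameter against which they are reported changes. By Remark~\ref{rem:strats} the encoding sizes exceed the abstract sizes by at most a logarithmic factor, say $|\enc(\model)|\le c\,|\model|\log|\model|$ and $|\enc(\inseq)|\le c\,|\inseq|\log|\model|$; hence $\timecompl{ST_A}(n,i)\le \hat f(c\,n\log n,\,c\,i\log n)=:g(n,i)$, where $\hat f$ is a coordinatewise monotone majorant of $f$ (needed because the sup defining $\timecompl{\strtemp_A}$ ranges over \emph{exact} encoding sizes). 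It then remains to observe that $g\in\complx$, which holds for all the standard complexity classes ($\Ptime$, $\EXPTIME$, \dots) since they are closed both under taking monotone majorants and under substituting a quasilinear function into the argument. This gives clause~(2), and therefore $\mclass,A\stratadapt[\complx]\varphi$. Had adaptive complexity been defined against encoding sizes as well, this last paragraph would collapse to ``same machine, same runs, same bounding function $f$'', so the content of the proposition is really just this mild bookkeeping about the two size measures.
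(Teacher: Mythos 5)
Your proposal is correct and is exactly the argument the paper intends: the paper's own proof is literally the single word ``Straightforward,'' meaning instantiate the general strategy at each model to obtain the strategy mapping, as you do. Your extra care about the mismatch between encoding sizes and abstract sizes (and the resulting need for closure of $\complx$ under monotone majorants and quasilinear substitution, which the blanket quantification over ``every complexity class'' quietly presupposes) is a wrinkle the paper glosses over entirely, so if anything your write-up is more complete than the original.
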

\begin{proof}
Straightforward.
\end{proof}

\begin{theorem}\label{prop:adaptive2uniform}
Adaptive ability does not universally imply uniform ability.
Formally, there exists a model template $\mclass$, coalition $A$ in $\mclass$, complexity class $\complx$, and \LTL objective $\varphi$, such that $\mclass,A \stratadapt[\complx] \varphi$ but not $\mclass,A \stratunif[\complx] \varphi$.
This applies even when restricting $A$ to singleton coalitions and $\varphi$ to reachability objectives, and fixing $\complx=\Ptime$.
\end{theorem}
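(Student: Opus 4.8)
The plan is to separate the two notions by exploiting the fact that adaptive ability allows each model to get its own strategy, so a family of ``individually easy but collectively non-uniformizable'' games does the trick. Concretely, I would build a model template $\mclass$ indexed by $k\in\Nat$, where $\model_k$ encodes ``the $k$-th instance of an undecidable or non-uniform problem'' but in such a way that, for each \emph{fixed} $k$, there is a trivial constant-time winning strategy — e.g., one that simply plays a fixed action that happens to be correct in that particular $\model_k$ — while no single Turing machine, reading the model encoding on its input tape, can compute in polynomial time (in fact in \emph{any} time bounded by a fixed function, but we only need polynomial) the correct action for all $k$. The natural source of such a ``hard to uniformize'' sequence is a non-recursive set: pick a set $S\subseteq\Nat$ that is not computable, let $\model_k$ be a one-shot game in which the proponent has two actions $\set{a_0,a_1}$ at the initial state, reaching a winning sink iff she plays $a_{[k\in S]}$ (i.e.\ $a_1$ when $k\in S$, $a_0$ otherwise), with a reachability objective $\Sometm\prop{win}$.

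For the \emph{adaptive} side: for each $k$, the strategy mapping $ST_A$ assigns the constant strategy that always outputs $a_1$ if $k\in S$ and $a_0$ otherwise. Each such strategy is a fixed finite-control Turing machine that writes a constant string and halts in $O(1)$ steps, independent of the input size, so $\timecompl{ST_A}(n,i)\in O(1)\subseteq\Ptime$, and it obviously enforces $\Sometm\prop{win}$ in its $\model_k$. Hence $\mclass,\set{a}\stratadapt[\Ptime]\Sometm\prop{win}$. For the \emph{uniform} side: a general strategy $\strtemp_A$ is a \emph{single} Turing machine that, given $\enc(\model_k)$ on the first tape, must output the winning action; composing it with the (trivial, polynomial-time) construction $k\mapsto\enc(\model_k)$ would yield a total computable function deciding membership in $S$ — contradicting non-computability of $S$. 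Thus no general winning strategy exists at all, so a fortiori none within \Ptime, giving $\mclass,\set{a}\notstratunif[\Ptime]\Sometm\prop{win}$ (indeed $\notstratunif[\complx]$ for every $\complx$, though the theorem only claims it for \Ptime).

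The main obstacle, and the only real subtlety, is making sure that the encoding map $k\mapsto\enc(\model_k)$ is itself effective and simple: the template must be a genuine \emph{countable family of finite iCGSs} (Definition~\ref{def:modeltemplate}), and the argument that ``uniform winning strategy $\Rightarrow$ $S$ computable'' requires that from the index $k$ we can compute the representation of $\model_k$ that the strategy machine expects on its tape. Since each $\model_k$ differs only in which of two transitions from $q_0$ leads to $q_\true$, this map is trivially computable and in fact polynomial, so the composition argument goes through. A secondary point is that the objective $\Sometm\prop{win}$ is a legitimate \LTL reachability formula interpreted over the infinite paths obtained by looping in the sink states, so the outcome-set machinery of the Outcome definition applies verbatim. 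One should also note explicitly that here the proponent has \emph{perfect information} (each $\sim_a$ the identity) and memory is irrelevant, which is why the separation is genuinely about the uniform-vs-adaptive distinction and not an artifact of imperfect information. Finally, to match the theorem's phrasing precisely (``even when restricting $A$ to singleton coalitions and $\varphi$ to reachability objectives, and fixing $\complx=\Ptime$''), we keep $A=\set{a}$ a singleton, keep $\varphi=\Sometm\prop{win}$, and observe the \Ptime bound is met with room to spare by the $O(1)$ adaptive strategies.
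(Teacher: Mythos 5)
There is a fatal gap on the uniform side of your argument. By Definition~\ref{def:strats}, a general computational strategy receives the \emph{explicit representation of the model} on its first input tape -- and in your construction that representation already contains the answer to ``$k\in S$?''. Concretely, $\enc(\model_k)$ lists the transition function and the valuation of $\prop{win}$, so a single Turing machine can scan the encoding, find which of $a_0,a_1$ leads from $q_0$ to the $\prop{win}$-labelled sink, and output that action. This is a general winning strategy running in time linear in $|\enc(\model_k)|$, so in fact $\mclass,\set{a}\stratunif[\Ptime]\Sometm\prop{win}$ holds and no separation is obtained. The non-computability of $S$ is a red herring: it only shows that the map $k\mapsto\enc(\model_k)$ is not computable from the \emph{index} $k$ (note that your own text is inconsistent here, first basing the construction on a non-recursive $S$ and then asserting that $k\mapsto\enc(\model_k)$ is ``trivially computable''), but the general strategy is never asked to compute the model from $k$; it is handed the model. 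This also undercuts the framework's intent that $\mclass$ be an effective class (``a procedure that, given a value of $k$, instantaneously returns the model''). More fundamentally, your closing remark that perfect information makes the separation ``genuine'' is exactly backwards: for perfect-information reachability games, a winning strategy (when one exists) can always be synthesized from the explicit arena by backward reachability in polynomial time, so uniform $\Ptime$-ability collapses to adaptive ability on any such family. Hardness must be hidden from a machine that sees the whole model, which forces imperfect information (or some other obstruction) into the construction.

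The paper's proof does precisely this: it reuses the imperfect-information SAT games $\mclass[Sat]$ from Theorem~\ref{prop:uniform-nontrivial}, where the verifier does not know which clause the refuter selected, so extracting a winning strategy from $\enc(M_\phi)$ amounts to deciding satisfiability of $\phi$; hence no polynomial-time general strategy exists unless $\Ptime=\NP$. For the adaptive side, each satisfiable instance admits an \iR winning strategy implementable by a Turing machine with $k+2$ states running in time polynomial in $k$, giving $\mclass[Sat],\set{\verf}\stratadapt[\Ptime]\Sometm\prop{win}$. If you want to repair your proof, you would need either to adopt an imperfect-information gadget of this kind or to find some other way to make the winning action information-theoretically present but computationally inaccessible from the model encoding; a perfect-information one-shot game cannot do that.
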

\begin{proof}
We take the game encoding $\mclass[Sat]$ of the SAT problem from the proof of Theorem~\ref{prop:uniform-nontrivial}.
It was already shown there that $\mclass[Sat],\set{\verf} \notstratunif[\Ptime] \Sometm\prop{win}$.
It remains to prove that $\verf$ has a collection of polynomial-time strategies to win each game in $\mclass[Sat]$.
Recall that
\begin{quote}
$\phi$ is satisfiable\quad iff\quad there exists an \iR strategy for $\verf$ in $M_\phi$ that surely obtains $\Sometm\prop{win}$. (*)
\end{quote}
Moreover, each \iR strategy of $\verf$ can be represented explicitly by a mapping from the number of observations in the history (which is bounded by $k+2$) to actions in $\set{\false,\true}$, thus also by a deterministic Turing machine with $k+2$ states, counting the number of observations on the input tape, and writing the appropriate decision as output. Clearly, the machine runs in time polynomial w.r.t.~$k$. (**)
By (*) and (**), we have that $\mclass[Sat],\set{\verf} \stratadapt[\Ptime] \Sometm\prop{win}$, i.e., the verifier has adaptive polynomial-time ability for $\Sometm\prop{win}$ in $\mclass[Sat]$.
\end{proof}

The proof of Theorem~\ref{prop:adaptive2uniform} relies on the assumption that $\PTIME\neq\NP$ and on the requirement that computational strategies must terminate \emph{on any input} (and not only the inputs that encode the game models in $\mclass$). We conjecture that the former assumption is essential, but the latter is not. Finding an alternative proof that does not employ the strong termination requirement is a matter for future work.
 \section{Model Checking for Computational Strategies}\label{sec:mcheck}

In this section, we formally define the problem of model checking for computationally bounded strategies, and study its decidability.
According to Rice's theorem, any non-trivial semantic property of Turing machines\footnote{
  A property is semantic if it depends only on the function computed by the machine. It is nontrivial if there are both machines that satisfy it, and ones that do not. }
is undecidable~\cite{Rice53decproblems}.
Since we restrict computation strategies to \emph{terminating} Turing machines, our framework is not directly applicable for Rice's theorem.
Still, one should expect mostly bad news here.

\subsection{Problem Definition}

First, we formally define what it means to verify the computational ability of a coalition in a given family of game models.
As before, we assume a finite nonempty set of all agents $\Agt$, a nonempty set of actions $Act$, and a countable set of atomic propositions $\Props$.

\begin{definition}[Model checking]\label{def:mcheck-uniform}
Model checking of uniform computational abilities is defined as the following decision problem.

\noindent
\textbf{Input:}
\begin{itemize2}
\item A specification of a countable family of iCGS's $\mclass = (\model_1,\model_2,\dots)$, given by a terminating Turing machine $gen$ with 1 input tape and 1 output tape; given $k\in\Nat$ as input, it generates a representation of the iCGS $\model_k$ as the output.

\item An \LTL formula $\varphi$, possibly using the encodings of atomic propositions in $\Props_{\mclass}$, defined by $gen$;
\item A coalition $A\subseteq\Agt$;
\item A complexity class $\complx$.
\end{itemize2}

\noindent
\textbf{Output:}
  \emph{true} if $\mclass,A \stratunif[\complx] \varphi$,
  otherwise \emph{false}.
\end{definition}

Model checking of adaptive computational abilities can be defined analogously, with the output being \emph{true} if $\mclass,A \stratadapt[\complx] \varphi$, and \emph{false} otherwise.

\subsection{Bad News: Undecidability}\label{sec:undecidable-singleagent}

The problem is undecidable even in the following special case.

\begin{theorem}\label{prop:undecidable-singleagent}
Model checking for uniform computational abilities is undecidable even for singleton coalitions with safety objectives and any complexity constraint from $O(n)$ upwards.
\end{theorem}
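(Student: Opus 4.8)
The plan is to reduce an undecidable problem about Turing machines — most naturally the halting problem (or non-halting) — to the model-checking instance, using the model template $\mclass$ as the vehicle that encodes a Turing machine and a simulated input, and using the safety objective together with the complexity bound to force the answer to depend on whether the machine halts. First I would fix an arbitrary Turing machine $N$ (the one whose halting we want to decide) and design a family $\mclass = (\model_1, \model_2, \dots)$, uniformly generated by a machine $gen$, such that $\model_k$ is a game arena in which the single agent $a$ must, in effect, ``drive'' a simulation of $N$ for (at least) a number of steps growing with $k$; the atomic proposition marking the unsafe/error states is placed exactly on the configurations that the agent is forced into if it cannot supply a correct continuation of the computation. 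The objective $\varphi = \Always\neg\prop{bad}$ (a safety objective) then holds along all outcome paths iff the agent has a strategy that always answers correctly, i.e., iff it can reproduce the behaviour of $N$'s run.

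The key trick is to make the existence of a \emph{computationally bounded} such strategy equivalent to $N$ halting. If $N$ halts, then the correct answers form an eventually periodic / finitely describable sequence (the halting configuration loops), so there is a finite-table strategy computable within $O(n)$ — indeed within any reasonable bound — giving $\mclass,\set{a}\stratunif[\complx]\varphi$. If $N$ does not halt, I would arrange the template so that the agent must, for each $k$, correctly output the $k$-th step of the (now infinite, non-repeating in the relevant sense) computation, and moreover the game structure gives the agent essentially no ``free'' working time: because the complexity is measured against $|\enc(\model_k)| + |\enc(\inseq)|$ and the model $\model_k$ can be made to have size only logarithmic or polynomial in $k$ while the required computation of $N$'s $k$-th configuration may take arbitrarily long, no single $f\in\complx$ (for $\complx$ at level $O(n)$ or above, but still a fixed time-constructible class) can bound the simulation cost uniformly over all $k$. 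Hence $\mclass,\set{a}\notstratunif[\complx]\varphi$. Combining the two directions, $\mclass,\set{a}\stratunif[\complx]\varphi$ iff $N$ halts (or its negation, depending on exactly how the encoding is set up), which reduces the halting problem to the model-checking problem and establishes undecidability.

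I would carry out the steps in this order: (1) describe the generator $gen$ and the arena $\model_k$ precisely — states encoding partial configurations of $N$ on empty input up to length/step $k$, the agent's moves being ``claimed next symbol/head-move,'' a deterministic transition that checks the claim against $N$'s rule and sends the play to $\prop{bad}$ on a wrong claim, and a self-loop on the halting configuration so that once $N$ halts the safety condition is trivially maintained; (2) verify $\mclass$ is a legal model template (finite iCGS's, shared $\Agt$, $Act$, $\Props$, uniform, deterministic transitions) and that $gen$ is a terminating Turing machine of the required form; (3) prove the ``halts $\Rightarrow$ uniform ability'' direction by exhibiting a small, fast general strategy; (4) prove the ``does not halt $\Rightarrow$ no uniform ability'' direction by the complexity argument sketched above, being careful that the encoding sizes $|\enc(\model_k)|$ and the encodings of the histories are small enough (sub-linear, or at least slower-growing than the simulation time) that a fixed class $\complx$ cannot absorb the unbounded-in-$k$ simulation cost; (5) conclude the reduction and invoke undecidability of the halting problem.

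The main obstacle, and the step I expect to need the most care, is step (4): pinning down the size bookkeeping so that non-halting genuinely defeats \emph{every} bound in $\complx$. One has to ensure the model $\model_k$ and the relevant history prefixes are encoded compactly (so that $n+i$ grows slowly in $k$) while simultaneously ensuring that producing the correct move at stage $k$ provably requires simulating $N$ for $k$ steps and cannot be shortcut — this is where a diagonalization-flavoured argument is needed, possibly designing $N$ itself or the reduction to make ``output the $k$-th configuration bit'' as hard as running $N$. A secondary subtlety is handling the clause in Definition~\ref{def:uniform-ability} that only a \emph{uniform} strategy over the whole template is forbidden: I must make sure no clever single Turing machine can, by reading $k$ off the model tape, answer correctly within the bound for all $k$ at once, which again comes down to the same size-versus-simulation-time tension; if $N$ halts this obstruction disappears because the eventual loop makes the answer sequence trivially uniformly computable.
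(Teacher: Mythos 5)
There is a genuine gap, and it sits exactly where you suspected: step (4) cannot be made to work, for structural reasons rather than mere bookkeeping. Your construction needs the deterministic transition function of the finite iCGS $\model_k$ to recognize whether the agent's claimed continuation of $N$'s run is correct; whatever information the model uses to perform that check is part of $\enc(\model_k)$ and therefore sits on the strategy's first input tape, so a strategy can simply read the correct move off the model encoding in time $O(n)$ --- the ``hard'' work has already been done by $gen$ (which may run as long as it likes) and handed to the strategy for free. If you instead try to keep $\model_k$ small and spread the verification over the play, then by the time the agent must answer about step $k$ of $N$'s run the history has length roughly $k$, and simulating a deterministic TM for $k$ steps costs only time polynomial in $k$, hence polynomial in $i=|\enc(\inseq)|$ --- so the simulation is affordable within the bound whether or not $N$ halts. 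The underlying error is the claim that non-halting makes computing ``the $k$-th configuration'' arbitrarily expensive: an infinite run does not make its $k$-th step hard to compute as a function of $k$. Finally, the theorem quantifies over \emph{every} class from $O(n)$ upwards, including $\EXPTIME$ and beyond, where direct simulation is always within budget; your intended equivalence ``$N$ halts iff a $\complx$-bounded winning strategy exists'' cannot hold for such classes, so the reduction would not prove the stated result even if the size bookkeeping were repaired.

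The paper's proof avoids all of this by making the complexity constraint vacuous rather than load-bearing. Given $TM$, it takes $\mclass[TM]$ to be the family of finite prefixes of the unique run of $TM$ on empty input, with a single action $idle$ everywhere and $\prop{accepting}$ marking accepting configurations. There is then exactly one strategy, it runs in $O(1)$, and $\mclass[TM],\set{a} \stratunif[O(1)] \Always\neg\prop{accepting}$ holds iff $TM$ does not halt: the undecidability is pushed entirely into whether the Turing-generated family of models satisfies the safety formula, not into whether a winning strategy can be computed within the bound. If you want to salvage your idea of making the \emph{complexity bound itself} the source of undecidability, that is essentially the (much harder) separation question the paper only addresses via the $\Ptime\neq\NP$-conditional argument of Theorem~\ref{prop:uniform-nontrivial}; for this theorem the trivial-strategy route is the right one.
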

\begin{proof}
We prove the undecidability by a reduction of the non-halting problem for deterministic Turing machines with initially empty input tape.
Given a representation of such a Turing machine $TM$, we construct the model template $\mclass[TM]$ as the set of all the finite prefixes of the sole run of $TM$, with $\Agt={a}$, all the transitions labelled by the sole action label $idle$, and the accepting configurations of $TM$ labelled by the sole atomic proposition $\prop{accepting}$.
The template is Turing-representable in a straightforward way.
Note that there is only $1$ strategy available in $\mclass[TM]$, namely the automatic strategy that returns $idle$ regardless of the input. Moreover, the automatic strategy has complexity $O(1)$.

Then, $TM$ does not halt iff the automatic strategy enforces\linebreak $\Always\neg\prop{accepting}$ iff $\mclass[TM],\set{a} \stratunif[O(1)] \Always\neg\prop{accepting}$, which completes the reduction.
\end{proof}

\WJ{This should also work for adaptive abilities, right?}

\subsection{Further Bad News}\label{sec:undecidable-coalitions}

The undecidability result in Section~\ref{sec:undecidable-singleagent} relied on the fact that $\mclass$ was a countable family of models.
Here, we show that the problem can be undecidable even for a single game.
Note that the uniform and adaptive abilities coincide when $\mclass$ is a singleton, hence the proof below covers both variants of model checking.

\begin{theorem}\label{prop:undecidable-coalitions}
Model checking for computational abilities is undecidable even for coalitions of size 2 with safety objectives and polynomial-time complexity constraints over model templates that consist of a single game.
\end{theorem}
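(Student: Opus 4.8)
The plan is to reduce from the non-halting problem for deterministic Turing machines started on the empty tape, adapting the classical undecidability arguments for cooperating players with imperfect information: the multiple-person games of Peterson and Reif, the distributed synthesis problem of Pnueli and Rosner, and the undecidability of model checking for alternating-time logic with imperfect information and perfect recall due to Dima and Tiplea. Given a deterministic Turing machine $N$, I would build a \emph{single} finite iCGS $\model_N$ with three agents: the coalition $A = \set{a_1,a_2}$, each agent equipped with its own partial observation, together with a fully informed adversary $\reft$ outside $A$. The winning condition is the safety formula $\varphi \equiv \Always\neg\prop{bad}$, where $\prop{bad}$ labels an absorbing sink. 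The construction is designed so that $A$ can enforce $\varphi$ in $\model_N$ exactly when $N$ does not halt; since $\model_N$ is computable from $N$, this reduces a $\Pi_1$-complete problem to model checking, so the problem is undecidable --- and because uniform and adaptive abilities coincide on the singleton family $\set{\model_N}$, both variants of the model-checking problem are undecidable at once.

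The game $\model_N$ runs an unbounded \emph{consistency challenge}. At each round, $\reft$ names a tape cell and a time step --- both spelled out incrementally (say, in unary), so that a coordinate whose announcement is completed at round $i$ of a play necessarily refers to a time step $\le i$ --- and the two coalition agents must answer with the content they claim cell $j$ has in $N$'s configuration at time $t$. The state retains a bounded window of the most recent answers, and the game moves irrevocably to $\prop{bad}$ as soon as it detects (i) disagreement with the blank initial configuration, (ii) a violation of $N$'s local transition rule relating the configuration at time $t-1$ to the one at time $t$, or (iii) an answer exhibiting a halting configuration of $N$. Exactly as in the classical constructions, the observation partition between $a_1$ and $a_2$ is engineered so that no joint \iR strategy can realize a globally consistent but \emph{incorrect} answer map: neither agent ever sees enough of $\reft$'s query history to coordinate a consistent lie, so the only strategy that survives all of $\reft$'s probing is the honest one, which answers according to the genuine run of $N$.

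For correctness, suppose first that $N$ does not halt. Then its run is an infinite sequence of legal, non-halting configurations, and the \emph{honest} strategy --- given an input history of length $\ell$, reconstruct the queried coordinate $(j,t)$ with $t \le \ell$, re-simulate $N$ from the blank tape for $t$ steps, and output the content of cell $j$ --- never triggers (i)--(iii), so it enforces $\varphi$. This strategy is polynomial-time: the configuration at time $s$ has size $O(s)$, so re-simulating $t \le \ell$ steps costs $\sum_{s \le t} O(s) = O(\ell^2) = O(n^2)$; on malformed inputs the machine returns a default action, so it is total. Hence $\model_N, A \stratunif[\Ptime] \varphi$. Conversely, suppose $N$ halts at step $T$ and take any joint \iR strategy of $A$. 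If the answer map it induces is anywhere inconsistent --- with the initial configuration or with $N$'s transition rule --- then $\reft$ queries, over a few consecutive rounds, the neighbouring coordinates that witness this inconsistency and drives the play into $\prop{bad}$. Otherwise the answer map agrees everywhere with the true run of $N$, so $\reft$ asks about a cell of $N$'s configuration at some time $t' \ge T$, which exposes the halting configuration, and the play reaches $\prop{bad}$. Either way the strategy fails, so no \iR strategy --- a fortiori no polynomial-time one --- enforces $\varphi$, that is, $\model_N, A \notstratunif[\Ptime] \varphi$.

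I expect the main obstacle to be the ``only if'' direction: designing the observation partition between $a_1$ and $a_2$ precisely enough that \emph{every} joint \iR strategy is forced either into a locally detectable inconsistency or into revealing the halting configuration --- genuinely excluding any clever ``inconsistent-but-unpunishable'' behaviour. This is the intricate heart of the Peterson--Reif and Pnueli--Rosner constructions and has to be carried out afresh in the iCGS format. A second, smaller subtlety is specific to the computational setting: keeping the honest winning strategy within polynomial time (hence the incremental encoding of coordinates and the re-simulation-from-scratch bound) rather than falling back on the naive exponential brute force over all \iR strategies. Here one exploits the pleasant asymmetry that \emph{producing} the honest strategy is cheap, whereas \emph{certifying} that it wins is exactly the undecidable part.
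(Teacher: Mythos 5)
Your proposal is correct in outline and matches the paper's proof essentially step for step: both reduce from the non-halting problem via a single three-agent imperfect-information game in which an adversary challenges the two-agent coalition to reproduce the run of the Turing machine, with the observation partition forcing faithful simulation, and both then hinge on the same key observation that the unique winning strategy is implementable by a polynomial-time Turing machine that re-simulates the first $t \le \ell$ steps of the run from the blank tape in time polynomial in the history length. The only real difference is the choice of gadget: the paper instantiates the Dima--Tiplea construction, where the adversary commits once (over several staggered steps) to a tape cell $c_k$ or to the frontier between $c_k$ and $c_{k+1}$ and the coalition then tracks its temporal evolution, with agent $1$ unable to distinguish cell-$c_k$ histories from frontier histories and agent $2$ unable to distinguish cell-$c_{k+1}$ histories from the same frontier histories, whereas you sketch a Peterson--Reif-style repeated query protocol; in both cases the intricate combinatorial core (your acknowledged ``only if'' obstacle) is deferred to the cited classical constructions.
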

\begin{proof}
\newcommand{\GG}{\mathbb{G}}
\newcommand{\ecoal}[1]{\coop{#1}}

We recall here the construction from~\cite{Dima11undecidable}.
Namely, given a deterministic TM $M$ which starts with a blank tape, we build a 3-agent iCGS whose set of
atomic propositions is a singleton $\{ok\}$
such that $\ecoal{\{1,2\}} \GG \, ok$ iff $M$ halts.
We then show that the winning strategy for the coalition can be implemented by a polytime TM.

The informal description of the iCGS is as follows.
Initially, agent 3 (the adversary) chooses a tape cell $c_k$. Then, agents 1 and 2 must simulate the evolution of the tape cell $c_k$ on the unique run $\rho_M$ of $M$.
By the evolution of a tape cell up to some time instant, we mean the sequence consisting
of the alphabet symbol that the respective cell contains, plus an information bit about whether the R/W head
points to cell $c_k$ or not.
The 3rd agent may also choose the ``frontier'' between the two cells $c_k$ and $c_{k+1}$
in order to test that the two agents correctly simulate the exact moments when the R/W head moves from $c_k$ to $c_{k+1}$ or the other way round --
in the sense to be explained below.
In order to cope with a fixed number of choices, agent 3's choices are staggered among several steps, as in \cite{Dima11undecidable}.

The evolution of tape cell $c_k$ in $\rho_M$ up to time instant $n$ is encoded by a sequence of
$2n+4$ iCGS states which may be of two types: the tape symbol $a$ where $a$ is the contents of $c_k$ at some $i\le n$ configuration $\rho_M[i]$
and the R/W head points to another cell,
or the pair $(a,q)$ where $a$ is the contents of $c_k$ and the R/W head points to $c_k$ at $\rho_M[i]$.
Simulating the $(n+1)$-th TM transition (that is, $\rho_M[n+1]$) on such a sequence of states of length $2n+4$ (call it a history of length $2n+4$) 
increases the length of the sequence by $2$ new states, as explained in the following:

In a history of length $2n+3$ and ending in $a$, both agents should play $idle$ during the following 2 rounds
if $c_k$ is not modified by the $n$-th transition in $\rho_M$ and the R/W head is not moving to $c_k$.
On the contrary, when in the $n$-th transition of $\rho_M$, the R/W head "\textbf{moves right} and reaches $c_k$"
by applying some transition $(q,b,r,b',R)$, agent $2$ must play action $(q,r,R)$ while agent $1$ must play $idle$.
The state $(a,r)$ is appended to the history (which has now length $2n+4$).
Dually, when the R/W head "\textbf{moves left} and reaches $c_k$"
by applying some transition $(q,b,r,b',L)$, agent $1$ plays $(q,r,L)$ while agent $2$ plays $idle$, state $(a,r)$ is appended to the history.
Subsequently, if the $(n+1)$-th transition in $\rho_M$ is $(r,a,s,c,L)$, agent $2$ should choose action $(r,s,L)$, while agent $1$ should play $idle$.
The target state of this joint action is $c$, to account for the fact that $c_k$ holds a $c$. A dual situation occurs when the $(n+1)$-th transition in $\rho_M$ is
$(r,a,s,c,R)$, agent $1$ should choose action $(r,s,R)$, while agent $2$ should play $idle$, and the target state is $c$.

When the 3rd agent challenges the two agents with correctly simulating the evolution of the "frontier" between $c_k$ and $c_{k+1}$,
the evolution of this "frontier" in $\rho_M$ up to time instant $n$ is simulated by another type history of length $2n+4$,
composed by either a special state $tr$ (transitory) or some triples $(q,r,L)$ or $(q,r,R)$ with $q,r$ being states of the iCGS.
Their meaning is as follows:

When the history ends in $tr$ and the frontier is not crossed during the $n$-transition $\rho_M[n]$, 
the two agents should both play $idle$ during the following $2$ rounds and two states $tr$ are then appended to the history at positions $2n+5$ and $2n+6$.
When the frontier is crossed "\textbf{from right to left}", due to the application of some transition $(q,a,r,b,L)$,
agent $2$ should play $(q,r,L)$, while agent $1$ should play $idle$, and a new state $(q,r,L)$ is appended to the history.
Subsequently, agent $1$ should immediately issue the action $(q,r,L)$ while agent $2$ plays $idle$ and another $tr$ state is appended to the history.
The dual situation of crossing the frontier  "\textbf{from left to right}", due to the application of some transition $(q,a,r,b,R)$,
is simulated by agent $1$ playing $(q,r,R)$, while agent $2$ should play $idle$, and the state $(q,r,R)$ is appended in position $2n+5$.
Subsequently, in $(q,r,R)$, agent $2$ plays $(q,r,R)$ while agent $1$ plays $idle$ and state $tr$ is appended in position $2n+6$.

Indistinguishability is used to verify that the evolution of two adjacent tape cells
and their frontier is correctly simulated by the two protagonists.
Hence, agent $1$ does not distinguish a history 
which corresponds with the evolution of tape cell $c_k$ upto some configuration $n$ in the TM run
from a history of the same length which corresponds with the evolution of the frontier between $c_k$ and $c_{k+1}$ upto configuration $n$.
Similarly, agent $2$ does not distinguish a history 
which corresponds with the evolution of tape cell $c_{k+1}$ upto configuration $n$
from a history of the same length which corresponds with the evolution of the frontier between $c_k$ and $c_{k+1}$ upto configuration $n$.
Therefore, when agent $1$ plays action $(q,r,L)$ on a history corresponding with the evolution of tape cell $c_k$
upto instant $n$, she must play the same action on the history corresponding with the evolution of the frontier between
cells $c_{k-1}$ and $c_k$.

To enforce the fact that a "good" joint strategy must correspond with a nonblocking run,
all transitions with combinations of actions not listed above lead to an error state labeled $\neg ok$.
An example is a joint transition with label $((q,r,L), idle)$ from a state labeled $(a,s)$ with $s \neq q$ -- the "good" joint transition should be $((s,r,L),idle)$,
and only if $(s,a,r,b,L)$ is a legal TM transition.
As a consequence, when the TM $M$ halts, say, with the R/W head pointing $c_k$ which bears symbol $a$ and the current state being $s$,
all joint transitions from $(a,s)$ in the iCGS lead to the error state. So, in order to avoid $\neg ok$ states,
the protagonists must ensure that they correctly simulate the good transitions on each
"cell" or "frontier" history, and each such history can be unboundedly extended, that is, the TM $M$ has a nonblocking run.

We now show that, when the TM $M$ never halts, the unique joint strategy $\sigma$ described above can be implemented by
a polytime TM $T_\sigma$. Let us note first that the decisions that each agent must make are based on
the decomposition of the history in two parts:
\begin{enumerate}
\item A part of length $O(k)$ which identifies the
index of the simulated tape cell $c_k$, or the frontier between $c_k$ and and $c_{k+1}$.
\item A part of length $O(n)$ which identifies the length of $\rho_M$
at the end of which the two agents must decide how the tape cell $c_k$ or the frontier between $c_k$ and $c_{k+1}$.
\end{enumerate}

Then $T_\sigma$ must simulate, on an internal tape, the first $n$ transitions in the unique run $\rho_M$,
then read the contents of the $k$-th tape symbol plus the information whether the R/W head of $M$ crosses one of the
frontiers of $c_k$, and produce the appropriate decision for each agent.
During this simulation, only the first $n$ tape symbols need to be simulated, since the R/W head cannot visit more than $n$
cells during the first $n$ steps in $\rho_n$.
Each simulation of one transition of $M$ requires time polynomial in $n$, plus the generation of the $n$ blank tape cells of the initial configuration, requiring again time polynomial in $n$, plus collecting the information
for the $k$-th tape cell and/or its neighbor, requiring again time polynomial in $n$.
 \end{proof}

\subsection{In Quest for Decidable Fragments}

Theorems~\ref{prop:undecidable-singleagent} and~\ref{prop:undecidable-coalitions} show that the problem is inherently undecidable.
That was hardly unexpected. In fact, we are aware of no nontrivial decidability results for problems that answer if a particular subclass of instances of a more general decision problem can be solved in a given complexity class.
Even deciding whether an arbitrary subset of Boolean satisfiability (SAT) can be tackled in deterministic polynomial time has been elusive so far~\cite{Szeider08paramSAT}.

In this subsection, we list a couple of decidable cases for model checking of computational ability.
All of them follow from ``complexity saturation'' properties, i.e., if a winning strategy exists at all, it must be computable within the given complexity bound.
In this sense, the results are not extremely exciting.
We hope, however, that the insights can serve as a starting point to obtain more interesting characterizations in the future.

\begin{theorem}\label{prop:decidable-singleagent}
Model checking for uniform computational abilities in case of singleton coalitions $A = \{a\}$, singleton families of games $\mclass = \set{M_l}$, and complexity constraints from $O(n)$ up is decidable.
\end{theorem}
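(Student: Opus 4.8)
The plan is to establish a ``complexity saturation'' property and then reduce to a decidable core question. Concretely: fix the single model $M_l$ (obtained by running $gen$); I claim that if agent $a$ has \emph{any} winning \iR strategy for $\varphi$ in $M_l$, then $a$ already has a winning \emph{computational} strategy for $\varphi$ in $M_l$ whose running time is linear in the length of its input. Granting this claim, for every complexity class $\complx$ from $O(n)$ upwards we have $\mclass,\set{a}\stratunif[\complx]\varphi$ iff $a$ has a winning \iR strategy for $\varphi$ in $M_l$. The latter is exactly the model-checking problem for $\coop{a}\varphi$ under the \iR semantics (equivalently, reactive synthesis for a single process under partial observation against an adversarial environment), which is decidable~\cite{Schobbens04ATL,Bulling10verification}. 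Hence the algorithm is: extract $M_l$, decide whether $a$ has a winning \iR strategy for $\varphi$, and return the answer.

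For the saturation claim I would proceed as follows. The requirement in Definition~\ref{def:uniform-ability} that \emph{every} outcome path satisfy $\varphi$ is precisely the \emph{sure-winning} condition in the two-player imperfect-information game in which $a$ is the protagonist, the remaining agents form the adversarial environment, and the $\omega$-regular objective is $\varphi$. For such games it is classical --- via the knowledge-based subset construction, the product with a deterministic $\omega$-automaton for $\varphi$, and memoryless determinacy of the resulting perfect-information parity game~\cite{Vester13ATL-finite,Alfaro00omegareg-games} --- that a winning observation-based strategy exists iff a \emph{finite-memory} one does, and with $M_l$ fixed the memory structure is a fixed finite automaton. A finite-memory observation-based strategy is implemented by a Turing machine that sweeps its input history of observations once from left to right, parses each encoded observation and updates the memory state inside its finite control (possible because, $M_l$ being fixed, the number of states, the number of actions, and the memory size are all constants, so each tape symbol is processed in $O(1)$ steps), and finally prints the prescribed action; on a malformed input it prints a fixed default action, which is harmless as outcome paths only involve well-formed histories. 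The machine is deterministic and total, and runs in time linear in its input, hence within $\complx$.

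Correctness of the reduction is then immediate. If $a$ has a winning \iR strategy, the saturation claim produces a linear-time winning computational strategy, so $\mclass,\set{a}\stratunif[\complx]\varphi$. Conversely, every computational strategy is, by construction, a (computable, hence legitimate) observation-based \iR strategy; so if no winning \iR strategy exists, then no winning computational strategy exists either, and $\mclass,\set{a}\notstratunif[\complx]\varphi$.

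I expect the work to be essentially bookkeeping rather than conceptual. The two points that need care are: (i) citing the right off-the-shelf facts --- ``finite memory suffices for sure-winning under imperfect information with $\omega$-regular objectives'' and ``single-agent partial-observation synthesis is decidable'' --- in a form that matches our iCGS models; and (ii) checking that the finite-memory strategy really compiles into a \emph{linear-time} (not merely polynomial-time) total Turing machine under the input encoding fixed in Remark~\ref{rem:strats}, which reduces to the observation that every model-dependent quantity in the construction is a constant once $M_l$ is fixed, so parsing one observation costs $O(1)$ per tape symbol.
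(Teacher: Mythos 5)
Your proposal is correct and follows essentially the same route as the paper's proof: both reduce to a perfect-information parity game via determinization of the \LTL objective and the knowledge/observation-based construction (the paper cites \cite{ChatterjeeD10,RaskinCDH07} for this step), extract a finite-memory winning strategy, and compile it into a linear-time Turing machine whose finite control absorbs all model-dependent constants since $M_l$ is fixed. Your explicit treatment of the converse direction (every computational strategy is a legitimate \iR strategy) and of malformed inputs is a slightly more careful writeup of the same argument.
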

\begin{proof}
We transform the iCGS $M_l$ into a two-player game with perfect information and a parity condition
in which the states are labeled with observations of the agent $a$, such that a memoryless strategy for agent $a$ exists in $M_l$
if and only if the protagonist wins the two-player game.
For that, we first build a deterministic parity automaton $\mathcal{A}$ which encodes the LTL formula,
then compute the synchronous composition of $M_l$ and $\mathcal{A}$ to obtain
a two-player game with imperfect information and parity winning condition.
Then we utilize results from \cite{ChatterjeeD10,RaskinCDH07}
to transform this game into a parity game with perfect information with the desired property
and which the protagonist wins if and only if she wins with a memoryless strategy,
that is, a strategy in which the choice of an action depends only on the last state of the history.

Then the Turing machine for the general strategy $S_a$ for $a$ would have, as its set of states,
the states of the two-player game, and, for each sequence of observations,
would output the action prescribed by the winning strategy in the two-player game
at the unique state of that game occurring at the end of the sequence of observations.
Note that this Turing machine works in time linear in the length of the sequence of observations.
Note also that the size of the two-player game is at most quadruply-exponential in the size of the original game structure and the LTL formula,
but, since the size of the input model is treated as constant, this has no impact on the complexity of the winning strategy.
\end{proof}

For the next theorem we adapt the notion of multi-energy games
from \cite{ChatterjeeRR14,complexity-multi-energy-2015,pseudopoly-multi-energy-2015} in order to extract an interesting subproblem of model-checking.
An \emph{iCGS with $n$ counters} is an iCGS $\model$
in which all the components of $\model$ are the same as in Definition 2.1, but transitions are tuples of the form
$(q,\overline{\alpha},\gamma,\upsilon, r)$ where:
\begin{itemize}
	\item  $\overline{\alpha} \in Act^{\Agt}$, $\gamma$ is a \emph{positive} boolean combination of
	formulas of the type $c_i > 0$, where $c_i$ is one of a fixed set of $n$ counter variables.
	\item $\upsilon$ is a set of operations of the type $(\text{++}c_i)$ or $(\text{-\;-}c_i)$.
\end{itemize}
The ``semantics'' of the $n$-counter iCGS is an infinite iCGS whose states -- called "configurations" -- consist of a state and a tuple of $n$ counter values.
A transition labeled by $\overline{\alpha}$ exists between two configurations $K_1 = (q_1,c_1,\ldots, c_n)$ and $K_2 = (q_2,c_1',\ldots c_n')$ if
there is a transition $(q_1,\overline{\alpha},\gamma,\upsilon,q_2)$ such that $(c_1,\ldots,c_n) \models \gamma$ and $(c_1',\ldots,c_n')$ is
obtained from $(c_1,\ldots,c_n)$ by applying operations in $\upsilon$.

Given two integers $N_0$ and $k$ and a multi-energy model $M$, the \emph{$(N_0,k)$-bounded model} $M_{N_0,k}$ is
obtained by bounding the possible values of $c_1,\dots,c_n$ by $N_0+k$
in all configurations.
Additionally, an attempt to increment the value of some counter to more than $N_0+k$ results in no change of the counter value.

The class of \emph{energy-bounded families of iCGS} generated by a multi-energy model $M$ and an integer $N_0$
is the class $\mclass = (M_k)_{k\in \Nat}$ where $M_k = M_{N_0,k}$ is defined as above.

\begin{theorem}\label{prop:decidable-energy}
Model checking computational abilities in energy-bounded families of iCGS and singleton coalitions is decidable.
\end{theorem}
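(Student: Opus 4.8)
The plan is to follow the ``complexity saturation'' pattern of Theorem~\ref{prop:decidable-singleagent}: I will show that, for an energy-bounded family $\mclass = (M_k)_{k\in\Nat}$ generated by $(M,N_0)$, agent $a$ either has \emph{no} uniform winning strategy at all (in which case the answer is \emph{false} regardless of $\complx$), or has one that is computable by a single general Turing machine running in time polynomial in the combined size of the model encoding and the input history; hence for any complexity class $\complx$ above that polynomial threshold (for smaller classes one restricts, as in Theorem~\ref{prop:decidable-singleagent}, to constraints from $O(n)$ up, or checks separately whether an even cheaper strategy suffices) the statement $\mclass,\{a\}\stratunif[\complx]\varphi$ holds iff agent $a$ wins every $M_k$, and the latter turns out to be decidable by a reduction to a single symbolic bounded multi-energy parity game together with effective stabilization results from~\cite{ChatterjeeRR14,complexity-multi-energy-2015,pseudopoly-multi-energy-2015}.

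Concretely, I would first translate the \LTL objective $\varphi$ into a deterministic parity automaton $\mathcal{A}_\varphi$ and take the synchronous product of the multi-energy model $M$ with $\mathcal{A}_\varphi$; this yields a multi-energy structure carrying a parity acceptance condition, the same $n$ counter variables, and agent $a$'s indistinguishability relation. Next I apply the knowledge-based subset construction of~\cite{ChatterjeeD10,RaskinCDH07}, turning the imperfect-information game into a perfect-information multi-energy parity game $\widehat{G}$ in which the protagonist wins iff agent $a$ has a winning \iR strategy. The structural point to establish here is that this construction commutes with counter-bounding: the $(N_0,k)$-bounded instance $\widehat{G}_k$ of $\widehat{G}$ is isomorphic to the knowledge game obtained from $M_k = M_{N_0,k}$, since capping the counters at $N_0+k$ only restricts the reachable configurations while leaving knowledge states and observation labels untouched.

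It then remains to decide whether the protagonist wins $\widehat{G}_k$ for \emph{all} $k$. I would relate the large-$k$ instances to the associated \emph{unbounded} multi-energy parity game $\widehat{G}_\infty$ (no cap on the counters). By finite-memory determinacy and the pseudo-polynomial energy bounds of~\cite{ChatterjeeRR14,complexity-multi-energy-2015,pseudopoly-multi-energy-2015}, if the protagonist wins $\widehat{G}_\infty$ she has a finite-memory winning strategy together with a computable energy bound $B$, and capping that strategy at $N_0+k$ is winning in $\widehat{G}_k$ whenever $N_0+k\ge B$; conversely, any strategy winning some $\widehat{G}_k$ also wins $\widehat{G}_\infty$, since removing the cap only makes counter values larger, so guards still pass. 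Hence the protagonist wins all $\widehat{G}_k$ with $k\ge k^*:=\max(0,B-N_0)$ iff she wins $\widehat{G}_\infty$; and for the finitely many $k<k^*$, each $\widehat{G}_k$ is a finite parity game whose winner (and a winning strategy that is memoryless on the knowledge arena) is computable directly. Agent $a$ has a uniform winning strategy at all iff the protagonist wins $\widehat{G}_\infty$ and each $\widehat{G}_k$ with $k<k^*$; in that case the canonical general strategy $\strtemp_a$ reads $k$ from the model encoding, uses a hard-coded table of the winners for $k<k^*$ and the capped $B$-strategy for $k\ge k^*$, then walks the input history maintaining the current knowledge state (a subset of configurations, of size polynomial in $|\enc(M_k)|$) and the finite-memory/counter component, outputting the prescribed action in time polynomial in $|\enc(M_k)|$ per step. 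Thus $\timecompl{\strtemp_a}$ is bounded by a fixed polynomial in $n+i$, and model checking is decided by solving the finitely many $\widehat{G}_k$ ($k<k^*$) and $\widehat{G}_\infty$, answering \emph{true} iff the protagonist wins all of them and the bounding polynomial lies in $\complx$.

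The main obstacle I expect is the third paragraph: importing from the multi-energy game literature exactly the facts needed --- finite-memory determinacy of \emph{multi-dimensional} energy-parity games and a computable energy bound $B$ --- and, crucially, that the stabilizing winning strategy is \emph{uniformly Turing-representable}, i.e.\ one finite description parameterized only by the cap $N_0+k$; combining this cleanly with the knowledge-subset construction requires checking that bounding the counters does not merge or split knowledge states in a $k$-dependent way, and that the capped strategy remains winning once the cap exceeds $B$. A secondary, more routine obstacle is verifying the commutation of the subset construction with counter-bounding and bookkeeping the logarithmic encoding overheads so that the per-step cost of $\strtemp_a$ stays polynomial.
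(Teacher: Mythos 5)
Your route is genuinely different from the paper's, and substantially heavier. The paper's entire argument rests on one saturation observation: every sequence of transitions feasible under the cap $N_0$ remains feasible under any larger cap $N_0+k$, so it suffices to decide strategy existence in the \emph{single} finite model $M_0=M_{N_0,0}$; that one imperfect-information parity game is then solved exactly as in Theorem~\ref{prop:decidable-singleagent} (deterministic parity automaton for the \LTL objective, product, knowledge construction via \cite{ChatterjeeD10,RaskinCDH07}, memoryless winning strategy, linear-time Turing machine). No unbounded game, no stabilization threshold $k^*$, and no appeal to the multi-energy-game results of \cite{ChatterjeeRR14,complexity-multi-energy-2015,pseudopoly-multi-energy-2015} beyond borrowing the modelling vocabulary. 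Your first two paragraphs (parity product, knowledge construction, commutation with counter-bounding) are compatible with the paper's proof; the divergence is entirely in your third paragraph, which replaces the monotonicity collapse by an analysis of the unbounded game $\widehat{G}_\infty$.

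That third paragraph --- the one you yourself flag as the main obstacle --- contains a genuine gap. The results you want to import concern perfect-information multi-dimensional energy (parity) games in which keeping the counters nonnegative \emph{is the objective}; here the counters occur only in transition guards of an imperfect-information arena, and $\widehat{G}_\infty$ after the knowledge construction is an infinite-state game (knowledge states would have to track sets of configurations including counter values), so finite-memory determinacy and a computable energy bound $B$ do not follow from the cited theorems as stated. Your converse direction is also not sound as written: raising or removing the cap makes counter values weakly larger, which (guards being positive combinations of $c_i>0$) \emph{enables more transitions}, hence enlarges the outcome set; a strategy winning $\widehat{G}_k$ may therefore fail in $\widehat{G}_{k'}$ for $k'>k$ or in $\widehat{G}_\infty$ because new violating paths appear. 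Note that the monotonicity that actually does the work goes the other way around: winning all $M_k$ trivially implies winning $M_0$, and the paper's claim is that a strategy for $M_0$ transfers upward --- a single containment statement about feasible transition sequences, not a determinacy-plus-bound argument. Reorganizing your proof around that observation would let you delete the third paragraph entirely.
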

\vspace*{-5pt}
\begin{proof}
Note that, to check the existence of a strategy for agent $a$ which enforces an LTL formula for all $M_k$,
it suffices to check if there is a strategy that enforces the formula in the model $M_0$, since
each sequence of transitions which is feasible for $N_0$ is also feasible for a larger upper bound.
Furthermore, checking the existence of a strategy for $M_0$ requires
building the finite-state two-player game arena from $M_0$, and then utilizing again the classical results on constructing parity automata
from LTL formulas and \cite{ChatterjeeD10,RaskinCDH07}  for making parities visible and determinizing the resulting automaton.
As a result, we obtain a two-player finite-state parity game which can be won by the protagonist if and only if in the original iCGS $M_0$,
player $a$ would have a strategy to enforce the given LTL formula.
Again, this would require a memoryless strategy which provides a general strategy implementable by a Turing machine
working in linear time, as in the proof of Theorem~\ref{prop:decidable-singleagent}.
\end{proof}

\WJ{TO BE CONSIDERED IN THE FUTURE?
\begin{enumerate}
\item The number of states and actions in the models in $\mclass$ is bounded by a constant $m$ (fixed beforehand). Then, the set of games to consider is finite and bounded as well. \WJ{Still not immediate how that implies that verifying the existence of, say, \Ptime strategy is decidable.} \CD{It's not clear for me where's the infinite class of problem instances here. If $m$ is the maximal number of states in iCGS from any class $\mclass$, then we only have finitely many
instances of the problem and an algorithm always exists -- a big if-then-else which "knows exactly" whether the input instance has a strategy or not in the appropriate complexity class. This works regardless of the complexity class, but it's not really an interesting algorithmic problem.}
\WJ{I agree, let's discard it.}

\item Model checking of the "SlidingRobots" of Aminof and Rubin? Can we generalize their parameterized verification in a sensible way?
\end{enumerate}
} \section{Conclusions}\label{sec:conclusions}

We present the concept of computationally-bounded strategic ability,
which defines the agents'power in terms of their ability to synthesize a winning algorithm working within a given complexity bound,
which computes their choices in a parameterized game arena.
Our notion is inspired by cryptographic definitions of security, research on human-friendly strategies, and parameterized model-checking. 
We show that the hierarchy of computational abilities does not collapse,
since polynomially-bounded strategies are strictly weaker than exponentially-bounded ones.
Moreover, we show that the uniform and non-uniform variants of computational ability do not coincide.
We also define a class of model checking problems which can be thought of as a computationally parameterized variant of strategy synthesis for multi-agent systems with imperfect information. We show that the problem is undecidable even for very restricted classes of inputs,
and provide some simple cases in which the problem becomes decidable.

For future work, we plan to investigate in closer detail the connection with cryptographic definitions
of security, and to identify more relevant decidable classes of the model-checking problem,
in which the hierarchy of computational strategic abilities does not collapse above linear time.
The complexity of the model-checking problem for the decidable cases is another interesting path of future research.

\begin{acks}
The work has been supported by NCBR Poland and FNR Luxembourg under the PolLux/FNR-CORE projects STV (POLLUX-VII/1/2019 \& C18/IS/12685695/IS/STV/Ryan) and SpaceVote (POLLUX-XI/14/SpaceVote/2023).
\end{acks}

\bibliographystyle{ACM-Reference-Format}

\end{document}